\numberwithin{equation}{section}
\numberwithin{equation}{section}
\newtheorem{theorem}{Theorem}[section]
\newtheorem{lemma}[theorem]{Lemma}
\numberwithin{equation}{section}
\begin{document}

\title{Sum-of-Gaussians tensor neural networks for high-dimensional Schr\"{o}dinger equation}


\author[{\$}1]{Qi Zhou}

\author[{\$}2]{Teng Wu}

\author[2]{Jianghao Liu}

\author[2]{Qingyuan Sun}

\author[3]{Hehu Xie\thanks{hhxie@lsec.cc.ac.cn}}

\author[1,4,5]{Zhenli Xu\thanks{xuzl@sjtu.edu.cn}}

\affil[1]{School of Mathematical Sciences, Shanghai Jiao Tong University, 200240, Shanghai, China}

\affil[2]{Zhiyuan College, Shanghai  Jiao  Tong  University,  Shanghai 200240, China}

\affil[3]{SKLMS, NCMIS, Institute of Computational Mathematics, Academy of Mathematics and Systems
Science, Chinese Academy of Sciences, Beijing 100190,
China,  and School of Mathematical Sciences, University of Chinese Academy of Sciences, Beijing 100049, China}

\affil[4]{MOE-LSC and CMA-Shanghai, Shanghai Jiao Tong University, 200240, Shanghai, China}

\affil[5]{SOG AI-Technology Co. Ltd., Shanghai.}

\affil[{$\$$}]{These authors contributed equally to this work}

\date{}
\maketitle

\begin{abstract}
We propose an accurate, efficient, and low-memory sum-of-Gaussians tensor neural network (SOG-TNN) algorithm for solving the high-dimensional Schr\"{o}dinger equation. The SOG-TNN  utilizes a low-rank tensor product representation of the solution to overcome the curse of dimensionality associated with high-dimensional integration. To handle the Coulomb interaction, we introduce an SOG decomposition to approximate the interaction kernel such that it is dimensionally separable, leading to a tensor representation with rapid convergence. We further develop a range-splitting scheme that partitions the Gaussian terms into short-, long-, and mid-range components. They are treated with the asymptotic expansion, the low-rank Chebyshev expansion, and the model reduction with singular-value decomposition, respectively, significantly reducing the number of two-dimensional integrals in computing electron-electron interactions. The SOG decomposition well resolves the computational challenge due to the singularity of the Coulomb interaction, leading to an efficient algorithm for the high-dimensional problem under the TNN framework. 
Numerical results demonstrate the outstanding performance of the new method, revealing that the SOG-TNN is a promising way for accurately tackling quantum systems.

{\bf Keywords:} Schr\"{o}dinger equation, tensor neural networks, high-dimensional integrals, sum-of-Gaussians approximation, 
high accuracy.

{\bf AMS subject classifications}.  	
35Q40, 65D40, 65N25, 68W25, 68W40
\end{abstract}

\section{Introduction}\label{sec::intro}
The solution of high-dimensional many-body systems with quantum accuracy represents a long-standing challenge in scientific computing. One central theme lies in the efficient solution of the Schr\"{o}dinger equation of high dimensions \cite{Schrodinger1926, Dirac1929}, which attracts broad interest in areas such as materials science \cite{POKLUDA2015127,Taureau2024}, quantum chemistry \cite{Hermann2020,Hermann2023,pfau2024excited}, and quantum optics \cite{Bekenstein2020,Liu2023,Douglas2015,Kira2011}.
The greatest difficulty in solving the high-dimensional Schr\"{o}dinger equation lies in the curse of dimensionality in storing and computing the wavefunction, where the computational complexity often grows exponentially with the number of electrons. Challenges also include the strong entanglement between electrons due to the Pauli exclusion principle, and the long-range nature and the origin singularity of the Coulomb kernel, posing additional difficulties in numerical calculations.

In recent decades, a variety of approaches have been developed to tackle above challenges. One class of methods is the wavefunction approach, which aims at a direct solution of the Schr\"{o}dinger equation. These include configuration interaction (CI)~\cite{pople1987quadratic,rask2021toward}, coupled cluster (CC)~\cite{purvis1982full,nagy2019approaching}, and quantum Monte Carlo (QMC)~\cite{zhang2003quantum,foulkes2001quantum,godoy2025software}. However, their application to large-scale systems is often constrained by high computational cost. Deterministic post-Hartree-Fock methods like CI and CC typically encounter steep polynomial or exponential scaling with respect to the number of electrons. Conversely, QMC scales more favorably, but it requires extensive statistical sampling to suppress noise. Approximate schemes based on the density functional theory \cite{Kohn1965,Martin2020,becke2014perspective} offer a viable path to reduce computational complexity. However, this class of methods presents an intrinsic trade-off between computational accuracy and efficiency, a challenge often referred to as the `Jacob's Ladder' \cite{Car2016Fixing}. Recently, the integration of deep neural network (DNN) architectures with quantum physics has sparked a revolution in representing high-dimensional functions~\cite{Han2017, E2018, Raissi2019}. Aided by stochastic sampling schemes such as variational Monte Carlo (VMC)~\cite{ceperley1977monte,bressanini1999between}, DNN-based solvers have achieved extraordinary success, effectively overcoming the curse of dimensionality and establishing new benchmarks for large-scale systems with unprecedented accuracy~\cite{Hermann2020,Hermann2023,pfau2024excited,choo2020fermionic,Carleo2017, Han2020, Pfau2020}. These methods have fundamentally expanded the boundaries of \textit{ab initio} calculations. However, by design, VMC-based approaches rely on stochastic optimization and estimation. While powerful, this inherently precludes the benefits of deterministic algorithms, such as noise-free evaluation, strict error bounds, and rapid spectral convergence, which are desirable for high-precision theoretical studies.

To achieve high precision without stochastic noise, one must revisit the potential of deterministic scaling. As a rigorous extension of classical Hartree-product-based approaches, the sparse grid methods~\cite{griebel2006wavelet,griebel2007sparse,griebel2009tensor} exploit sparsity patterns derived from the mixed-derivative regularity analysis of the wavefunction, thereby drastically reducing the degrees of freedom while preserving convergence rates essentially independent of the number of electrons~\cite{yserentant2004regularity,Yserentant2005Sparse,Yserentant2007Hyperbolic,Yserentant2012Mixed}. In practice, however, the efficiency of this approach depends fundamentally on the fixed underlying basis. While sparse grids theoretically achieve dimension-independent convergence rates, the curse of dimensionality often resurfaces through the uncontrolled growth of error prefactors. The recently developed Tensor Neural Network (TNN) architecture~\cite{Wang2024Tensor,wang2024multi} can be regarded as the re-extension of sparse grid method, providing a robust solution. The TNN effectively reinvents the sparse grid as an adaptive and learnable ansatz capturing a significantly sparser representation. By replacing the rigid, fixed basis with powerful, non-linear deep neural networks, TNN synergizes the high-dimensional expressivity with the deterministic structure of tensor decompositions and has been successfully applied to solving a wide range of high-dimensional problems~\cite{Yu2026Tensor,Hu2024Tackling,Lin2025Tensor,Lin2025Solving,Wang2025Tensor}. This hybridization transforms the many-body problem into a deterministic optimization task, enabling, for the first time, high-precision, noise-free solutions that enjoy the scalability of deep learning without the stochastic limitations.
Nevertheless, due to the presence of the Coulomb interaction in the Schr\"{o}dinger equation, the challenge of high-dimensional integration persists when the wavefunction is represented by the TNN. The TNN for solving the Schr\"odinger equation was first explored in \cite{202209v2}, where the Coulomb kernel is expanded by spherical harmonic expansion with associated Legendre polynomials. However, the convergence of the tensor representation with the spherical harmonics is slow due to the kernel singularity. To achieve high efficiency, an optimized representation for the Coulomb interaction is crucial for the TNN solution for the high-dimensional Schr\"{o}dinger equation.

In this paper, we propose a novel method based on the TNN and a sum-of-Gaussians (SOG) decomposition, SOG-TNN, for the efficient computation of the eigenvalue problem for the high-dimensional Schr\"{o}dinger equation. The SOG approximation on the Coulomb interaction offers the advantages of a uniform error distribution and rapid convergence \cite{beylkin2005,beylkin2010,LIANG2025101759}. Crucially, each Gaussian term is separable across spatial dimensions, a property that aligns perfectly with the tensor product structure of the TNN framework. This separability enables the decomposition of the six-dimensional interaction integral into a product of three two-dimensional integrals. Subsequently, we employ this structure by designing distinct low-rank methods tailored to the different bandwidths of the terms in the SOG expansion. This allows for the efficient computation of contributions from these varying interaction ranges. Numerical results demonstrate that the SOG-TNN method can compute the ground-state energies for the helium  and lithium systems with over $10^{-7}$ accuracy, and more efficient than the TNN with the tensor representation via spherical harmonics (SHE-TNN). For the beryllium atom, the SOG-TNN method achieves $10^{-5}$ accuracy while using only one-tenth of the memory of a single GPU.  This represents an accuracy improvement of three orders of magnitude over the SHE-TNN method that utilized the full memory of the GPU. Besides, the comparison results indicate that SOG-TNN reduces the basis size by more than two orders of magnitude while maintaining the accuracy of the baseline and preserving accurate antisymmetry.
Notably, although our method utilizes a machine learning framework, the learning error of the SOG-TNN approach is negligible compared to the approximation error in application scenarios \cite{wang2024solving,wang2024multi}. This allows for the explicit determination of the parameter settings required to achieve a desired precision. The variational principle underlying the Galerkin framework ensures that the method is naturally applicable to high-precision calculations of various excited states \cite{WANG2024112928}, highlighting its strong potential for a wide range of future applications.

The remainder of this paper is organized as follows. In Section~\ref{sec::VP}, we review the high-dimensional Schr\"{o}dinger equation. Section~\ref{sec::SOG-TNN} presents the details of the proposed SOG-TNN algorithm. Section~\ref{sec::Range_split} develops a range-splitting strategy to accelerate the calculation. Section~\ref{sec::error} is devoted to the parameter selection strategy for the SOG-TNN method by estimating the convergence of errors. 
Numerical results for several atomic systems are presented in Section~\ref{sec::num_example}. 
Concluding remarks are offered in Section~\ref{sec::conclusion}. 

\section{The Schr\"{o}dinger equation}
\label{sec::VP}
Consider a quantum system with $M$ ions and $N$ electrons. The eigenvalue problem for the Schr\"{o}dinger equation is given by
\begin{equation}
\hat{H}\Psi(\bm{r})=E\Psi(\bm{r}),
\label{TISE}
\end{equation}
where $E$ is the eigenvalue, and $\hat{H}$ is the Hamiltonian operator described by
\begin{equation}
\hat{H} = -\frac{1}{2}\sum_{i=1}^N \Delta_i+ \sum_{i=1}^N \sum_{j=i+1}^N \frac{1}{|\bm{r}_i-\bm{r}_j|} -\sum_{k=1}^M\sum_{i=1}^N \frac{Q_k}{|\bm{r}_i-\bm{R}_k|}+\sum_{k=1}^M\sum_{l=k+1}^M\frac{Q_{k}Q_{l}}{|\bm{R}_{k}-\bm{R}_{l}|}.
\label{Hamiltonian_Operator}
\end{equation}

Here, $\Delta_i$ denotes the Laplace operator of the $i$th electron, $\{Q_k\}_{k=1}^{M}$ are the nuclear charges of ions, and $\bm{r}_i = (x_i,y_i,z_i)$ and 
$\bm{R}_k=(X_k,Y_k,Z_k)$ denote the locations of electrons and ions, respectively. Under the Born-Oppenheimer approximation \cite{Born1927BO}, the wavefunction $\Psi$ is treated as a function of the electronic coordinates only, while the much heavier ions are considered fixed in position. The exponential decay property \cite{Agmon1982} of the wavefunction $\Psi(\bm{r})=\Psi(\bm{r}_1,\cdots,\bm{r}_N)$ allows us to truncate it into a finite domain $\Omega=\Omega_0^{\otimes 3N}$ with $\Omega_0=[-r_c,r_c]$, and consider only the electronic probability distribution therein. The Hamiltonian operator $\hat{H}$ for the high-dimensional Schr\"{o}dinger equation possesses a series of eigenvalues $E_0\le E_1\le \cdots\le E_k\le\cdots$. The primary focus of this work is to solve the ground state energy $E_0$ and its corresponding eigenfunction. The Rayleigh principle \cite{babuska1989finite, babuska1991eigenvalue} states that the ground state energy corresponds to the minimum of the energy functional $\mathscr{E}[\Psi]$,
\begin{equation}\label{eq::Rayleigh}
E_0=\min_{\Psi\in V_{\mathcal{A}}}\mathscr{E}[\Psi],\quad \mathscr{E}[\Psi]=\frac{\langle\Psi|\hat{H}|\Psi\rangle}{\langle\Psi|\Psi\rangle}, 
\end{equation}
where $\langle\cdot|$ and $|\cdot\rangle$ are the Dirac bra and ket notation \cite{dirac1939new}. 
Let $V_\mathcal{A}$ denote the function space of $\mathbb{R}^{3N}$ variables
that satisfies the Pauli exclusion principle, that is, 
for any two identical electrons $\bm{r}_i$ and $\bm{r}_j$, $1\le i< j\le N$, 
with the same spin state, the function $\Psi\in V_\mathcal{A}$ should satisfy 
the antisymmetric condition through the exchange operator $T_{ij}$ that
\begin{equation}\label{eq::Pauli_exclusion}
\frac{\langle T_{ij}\Psi | \Psi \rangle}{\langle \Psi | \Psi \rangle} = -1,\quad T_{ij}\Psi(\bm{r}_1,\cdots,\bm{r}_i,\cdots,\bm{r}_j,\cdots,
\bm{r}_N)=\Psi(\bm{r}_1,\cdots,\bm{r}_j,\cdots,\bm{r}_i,\cdots,\bm{r}_N).
\end{equation}

It is remarked that the general excited-state case, corresponding 
to multiple eigenvalues $E_k$ for $k\ge 1$, the Rayleigh principle 
\cite{babuska1989finite, babuska1991eigenvalue} also provides a similar observation of
\begin{equation}\label{eq::Rayleigh_excited}
E_k = \min_{V_{\mathcal{A}}^{k}} \max_{\Psi \in V_{\mathcal{A}}^{k}} \mathscr{E}[\Psi],
\end{equation}
where $V_{\mathcal{A}}^{k}$ is any $k+1$-dimensional subspace of $V_\mathcal{A}$. 
Therefore, all the techniques for treating the ground state can be easily 
extended to the problem of solving general excited states.

\section{Sum-of-Gaussians tensor neural network}\label{sec::SOG-TNN}
In this section, we propose the TNN method employing the SOG for the tensor 
decomposition of Coulomb interactions, resulting in the efficient SOG-TNN solver 
for the high-dimensional many-body Schr\"{o}dinger equation.

\subsection{TNN architecture}\label{sec::TNN}
Under the Rayleigh principle, the primary computational bottleneck is the high-dimensional numerical integration in Eq.~\eqref{eq::Rayleigh}. The TNN framework provides an effective solution to this challenge \cite{Wang2024Tensor}. Consider a $d$ dimensional function $u(\bm{x})$ for $\bm{x}=(x_1,\cdots,x_d)\in \mathbb{R}^d$ ($d=3N$ in our problem). Let $\bm{\beta}=(\beta_1,\cdots,\beta_d)\in \mathbb{N}^d$ and $\beta=\beta_1+\cdots+\beta_d$. We denote the partial derivative of $u$,
\begin{equation}\label{eq::Multi_index}
\partial^{\bm{\beta}}u=\frac{\partial^{\bm{\beta}}}{\partial\bm{x}^{\bm{\beta}}}u=\frac{\partial^{\beta}}{\partial x_1^{\beta_1}\cdots\partial x_d^{\beta_d}}u.
\end{equation} 
For a bounded tensor-product domain \(\Omega = \bigotimes_{i=1}^{d} \Omega_i\), the Sobolev space \(H^m(\Omega)\) admits an isomorphic tensor decomposition $H^m(\Omega_1 \times \cdots \times \Omega_d) \cong H^m(\Omega_1) \otimes \cdots \otimes H^m(\Omega_d),$ where 
\begin{equation}
H^m(\Omega)=\bigl\{\,u\in L^2(\Omega)\,\bigm|\,
\partial^{\boldsymbol\beta}u\in L^2(\Omega)
\ \text{for all}\ \beta \le m\bigr\},
\end{equation}
and the corresponding norm reads
\begin{equation}
\|u\|_{H^m(\Omega)}:=\left[\int_{\Omega}\sum_{\beta\le m}\left|\partial^{\bm{\beta}}u(\bm{x})\right|^2\mathrm{d}\bm{x}\right]^{1/2}.
\end{equation}
The TNN is a subspace approximation method for high-dimensional functions \cite{Beylkin2002Numerical,Kolda2009Tensor,Hong2020Generalized} by using the isomorphic relation.
Figure~\ref{fig:schema} provides a schematic illustration of the TNN architecture. It comprises $d$ independent feedforward sub-neural networks
\begin{equation}
\bm{\phi}_i(x_i; \bm{\theta}_i) = \left(\phi_{i,1}, \dots, \phi _{i,p}\right)^{\top}, \quad i=1,\dots,d
\end{equation}
with the expression
\begin{equation}\label{eq::FNN_basis}
\bm{\phi}_{i}(x_i;\bm{\theta}_i)=\bigl[\bm{F}_{i}^{(L+1)}\circ\bm{F}_{i}^{(L)}
\circ \cdots \circ \bm{F}_{i}^{(1)}\bigr](x_i), 
~\hbox{with}~\bm{F}_{i}^{(l)}(\bm{X})= \bm{\eth}^{(l)} \bigl(\bm{W}_{i}^{(l)}\bm{X} 
+ \bm{b}_{i}^{(l)}\bigr),
\end{equation}
where $\bm{\eth}^{(l)}(\bm{X})$ is the activation function, $\bm{\theta}_i=\{\bm{W}_i^{(l)},\bm{b}_i^{(l)}, l=1, \cdots, L\}$ are parameters, and $L$ denotes the number of hidden layers. 
Denote the number of neurons for each sub-neural network by $[N_{0},N_1,N_2$, $\cdots$, $N_L$, $N_{L+1}]$ 
with input $N_0=1$ and output $N_{L+1}=p$. One has $\bm{\eth}^{(l)}(\bm{X})=\left ( \eth(X_1), \eth(X_2), \dots, \eth(X_{N_l}) \right )^{\top} $ with $ \bm{W}_{i}^{(l)}\in \mathbb{R}^{N_l\times N_{l-1}}$ and $\bm{b}_{i}^{(l)}\in\mathbb{R}^{N_l}$. In the networks,
$\phi_{i,t}$ is normalized in $L^2(\Omega)$ norm, and each sub-neural network $\bm{\phi} _i: \Omega_i \to \mathbb{R}^p$ with parameters $\bm{\theta}_i$ maps a $1$-dimensional input $x_i$ to a $p$-dimensional output through Eq.~\eqref{eq::FNN_basis}. The TNN constructs a rank-$p$ subspace $V_{\mathcal{T}}^{p} \subset H^m(\Omega)$ with the tensor 
expression \cite{wang2024solving}
\begin{equation}
V_{\mathcal{T}}^{p} = \operatorname{span}\left\{ \Phi_{t}(\bm{x};\bm{\theta})= \prod_{i=1}^d \phi_{i,t}(x_i; \bm{\theta}_i) \right\}_{t=1}^p \subset H^m(\Omega_1) \otimes \cdots \otimes H^m(\Omega_d).
\label{eq::basis_function_space}
\end{equation}
The TNN finds an approximation $\Psi_{\mathcal{T}}^p(\bm{x};\bm{\theta})$ in space $V_{\mathcal{T}}^p$ to the solution $\Psi(\bm{x})$ such that   
\begin{equation}\label{eq::TNN_structure_explain}
\Psi(\bm{x})\approx\Psi_{\mathcal{T}}^{p}(\bm{x};\bm{\theta})=\sum_{t=1}^{p}\alpha_t \cdot\prod_{i=1}^{d}\phi_{i,t}(x_{i};\bm{\theta}_{i}),    
\end{equation}
where coefficients $\alpha_t$ can be determined under the Galerkin framework \cite{SIRIGNANO20181339}. 
By Eq.~\eqref{eq::TNN_structure_explain}, the high-dimensional integral of $\Psi(\bm{x})$ on $\Omega$ can be handled via a separation of variables
\begin{equation}\label{eq::Integral_psi}
\begin{aligned}
\int_{\Omega}\Psi(\bm{x})\mathrm{d}\bm{x}\approx\sum_{t=1}^{p}\alpha_t\cdot\left[\prod_{i=1}^{d}\int_{\Omega_i}\phi_{i,t}(x_i;\bm{\theta}_i)\mathrm{d}x_i\right].
\end{aligned}
\end{equation}
It follows directly that if the integrand consists of the product of $\Psi(\bm{x})$ and other factors that also have a low-rank tensor product structure, then the corresponding high-dimensional integral can be reduced similarly to a sum of products of one-dimensional integrals. Using Gauss-Legendre quadrature to compute these one-dimensional integrals, the TNN architecture provides high-order approaches for high-dimensional integrals.  
\begin{figure}[!ht] 
\centering
\includegraphics[width=0.9\textwidth]{}
\caption{The TNN architecture for approximating solutions to high-dimensional equations. It consists of $d$ independent feedforward sub-neural networks mapping $\mathbb{R}$ to $\mathbb{R}^p$. The outputs are multiplied to form $p$ tensor-product basis functions that are linearly combined to approximate the solution.}\label{fig:schema}
\end{figure}

It is a well-established result that the sub-neural network \eqref{eq::FNN_basis} can approximate any continuous function on a compact set \cite{Leshno1993, Ellacott1994}. The aforementioned isomorphism between Sobolev spaces implies that the TNN can also approximate well any continuous function \cite[Theorem 2.1]{Wang2024Tensor} with the corresponding rank. In contrast, the accuracy 
depends on how well the subspace $V_{\mathcal{T}}^{p}$ approximates the target solution in space $H^m(\Omega)$. 
To show the error bound,  let \(\{\psi_{\mathbf{k}}(x)\}\) be the \(d\)-dimensional Fourier basis on \(\Omega^{d}\), with \(\bm{k}=(k_1,\dots,k_d)\in\mathbb{Z}^d\). Let $t$ and $s\in\mathbb{R}$. Assume that $\Psi(\bm{x}) \in H_{\text{mix}}^{t, s}(\Omega^d)$ defined by \cite{griebel2007sparse, yserentant2004regularity}
\begin{equation}
H_{\mathrm{mix}}^{t,s}(\Omega^d) := \left\{ u(\bm{x}) = \sum_{\bm{k} \in \mathbb{Z}^d} c_{\bm{k}} \psi_{\bm{k}}(\bm{x})\Big| \|u\|_{H_{\mathrm{mix}}^{t,s}(\Omega^d)}< \infty \right\},
\end{equation}
where the norm is given by
$\|u\|_{H_{\mathrm{mix}}^{t,s}(\Omega^d)} = \left( \sum_{\bm{k} \in \mathbb{Z}^d} \lambda_1(\bm{k})^{2t} \cdot \lambda_2(\bm{k})^{2s} \cdot |c_{\bm{k}}|^2 \right)^{1/2},$
with $\lambda_1(\bm{k}) = \prod_{i=1}^d (1 + |k_i|)$ and $\lambda_2(\bm{k}) = 1 + \sum_{i=1}^d |k_i|.$
When $t>0$ and $m>s$, there exists a rank-$p$ TNN approximation $\Psi_{\mathcal{T}}^{p}(\bm{x};\bm{\theta})$ defined by \eqref{eq::TNN_structure_explain} such that the following error bound holds \cite[Theorem 2.2]{Wang2024Tensor}
\begin{align}\label{eq::TNN_approx}
\|\Psi(\bm{x})-\Psi_{\mathcal{T}}^{p}(\bm{x};\bm{\theta})\|_{H^m(\Omega^d)}\leq C(d)\cdot p^{-(s-m+t)}\cdot\|\Psi(\bm{x})\|_{H_{\mathrm{mix}}^{t,s}(\Omega^d)},
\end{align}
where $C(d) \le c \cdot d^2 \cdot 0.97515^d$ with $c$ being independent of $d$.
It is remarked that the convergence result (\ref{eq::TNN_approx}) requires high regularity of the solution. In spite that the 
wavefunctions of the Schr\"{o}dinger equation may not satisfy high regularity \cite{yserentant2004regularity}, our results show that the TNN solution achieves high accuracy with the increase of the basis size.

Specifically for the high-dimensional Schr\"{o}dinger equation, since the configuration of electrons can be restored as
\begin{equation}\label{eq::electrons}
\bm{r}=(\bm{r}_1,\cdots,\bm{r}_N)=(x_1,y_1,z_1,x_2,y_2,z_2,\cdots,x_N,y_N,z_N)\in \mathbb{R}^{3N},
\end{equation}
the wavefunction $\Psi(\bm{r})$ has a TNN approximation $\Psi_{\mathcal{T}}^{p}(\bm{r};\bm{\theta})$ with the form of Eq.~\eqref{eq::TNN_structure_explain} and dimension $d=3N$. The only remaining question is how to construct the loss function for optimization. Evidently, if one were to use only the energy functional $\mathscr{E}[\Psi]$ derived from the Rayleigh principle in Eq.~\eqref{eq::Rayleigh}, the optimization could venture into an unphysical region-specifically, the complement of $V_{\mathcal{A}}$ where the Pauli exclusion principle mentioned in Eq.~\eqref{eq::Pauli_exclusion} is violated-thus failing to obtain the true ground-state energy and the corresponding wavefunction. 

Following the SHE-TNN method for the Schr\"{o}dinger equation \cite{202209v2}, we incorporate the Pauli exclusion principle into the loss function as a penalty term. Since $T_{ij}$ is a self-adjoint unitary operator satisfying $T_{ij}^2=T_{ij}T_{ij}^{*}=id$ with eigenvalues $\pm 1$, the minimization of its Rayleigh quotient acts as a spectral constraint that targets the antisymmetric eigenstate corresponding to the lowest eigenvalue $-1$. This ensures that the low-rank tensor product approximation of the wavefunction $\Psi_{\mathcal{T}}^{p}(\bm{r};\bm{\theta})\in V_{\mathcal{T}}^{p}\cap V_{\mathcal{A}}$ obtained from the optimization is physically compliant. Denote $N_{\uparrow}=\lceil N/2\rceil$ and $N_{\downarrow}=N-N_{\uparrow}$ as the number of electrons with spin-up and spin-down states, respectively. Set $\tau_{\uparrow}:[N_{\uparrow}]\rightarrow[N]$ and $\tau_{\downarrow}:[N_{\downarrow}]\rightarrow[N]$ as the corresponding index mappings to the global indices of electrons. Then the solution $\Psi_{\mathcal{T}}^{p}(\bm{r}; \bm{\theta}^{*})$ of the ground state wavefunction is derived from 
\begin{equation}
\begin{aligned}
\Psi_{\mathcal{T}}^{p}(\bm{r};\bm{\theta}^*) = \mathop{\text{argmin}}_{\Psi(\bm{r};\theta)\in V_{\mathcal{T}}^{p}} \mathcal{L}[\Psi]   
\end{aligned}
\label{eq:pauli_constraint}
\end{equation}
with the following loss function
\begin{equation}\label{eq::loss}
\mathcal{L}[\Psi]=\frac{\langle\Psi|\hat{H}|\Psi\rangle}{\langle\Psi|\Psi\rangle} 
+ \sum_{i<j}^{N_\uparrow}\lambda_{ij}^\uparrow\frac{\langle T_{\tau_{\uparrow}(i)\tau_{\uparrow}(j)}\Psi|\Psi\rangle}{\langle\Psi|\Psi\rangle} 
+\sum_{i<j}^{N_\downarrow}\lambda_{ij}^\downarrow\frac{\langle T_{\tau_{\downarrow}(i)\tau_{\downarrow}(j)}\Psi|\Psi\rangle}{\langle\Psi|\Psi\rangle},  
\end{equation}
where $\lambda_{ij}^{\uparrow}$ and $\lambda_{ij}^{\downarrow}$ are penalty parameters for spin-up and spin-down pairs, respectively. By adjusting these hyperparameters, we can explicitly control the strictness of the antisymmetry constraint. Similar strategies are widely adopted in the loss function design for parametric PDEs, where hyperparameters balance the optimization focus between data fidelity and differential operator residuals~\cite{E2018, Raissi2019}. For notational simplicity, we denote the wavefunction $\Psi_{\mathcal{T}}^{p}(\bm{r};\bm{\theta})$ obtained at each step of the TNN process as $\Psi$, and the mass term $\langle\Psi|\Psi\rangle$ is calculated by 
\begin{equation}
\begin{aligned}
\langle\Psi|\Psi\rangle&=\sum_{t_1,t_2=1}^{p}\alpha_{t_1}\alpha_{t_2}\int_{\Omega}\prod_{i=1}^{N}\phi_{i,t_1}^x(x_{i})\phi_{i,t_1}^y(y_{i})\phi_{i,t_1}^z(z_{i})\phi_{i,t_2}^x(x_{i})\phi_{i,t_2}^y(y_{i})\phi_{i,t_2}^z(z_{i})\mathrm{d}\bm{r}\\      &=\sum_{t_1,t_2=1}^{p}\alpha_{t_1}\alpha_{t_2}\prod_{i=1}^{N}\prod_{\eta\in \{x,y,z\}}\int_{\Omega_0}\phi_{i,t_1}^{\eta}(\eta_i)\phi_{i,t_2}^{\eta}(\eta_i)\mathrm{d}\eta_i,
\end{aligned}
\label{eq::Mass}
\end{equation}
where the $3N$ basis are distinguished by their coordinates in the superscript $\eta\in \{x,y,z\}$. Compared to the high-dimensional integration, the computational complexity dramatically decreases from $\mathcal{O}(K^{3N})$ to $\mathcal{O}(3p^2NK)$, where $K$ denotes the number of proxy points applied on each dimension. The exchange inner product $\langle T_{ij}\Psi|\Psi\rangle$ can be calculated analogously as Eq.~\eqref{eq::Mass} since all coordinates are still separable in the integral. The term containing the Hamiltonian mentioned in Eq.~\eqref{Hamiltonian_Operator} can be further decomposed as
\begin{equation}
\begin{aligned}
\frac{\langle\Psi|\hat{H}|\Psi\rangle}{\langle\Psi|\Psi\rangle} 
&= \sum_{i=1}^{N}\frac{\int_{\Omega}|\nabla_i\Psi|^2\mathrm{d}\bm{r}}{2 \int_{\Omega}|\Psi|^2\mathrm{d}\bm{r}}+\sum_{k<l}^{M}\frac{Q_{k}Q_{l}}{|\bm{R}_{k}-\bm{R}_{l}|} 
- \sum_{i=1}^{N}\sum_{k=1}^{M}\frac{Q_k\int_{\Omega}\frac{|\Psi|^2}{|\bm{r}_i-\bm{R}_k|}\mathrm{d}\bm{r}}{\int_{\Omega}|\Psi|^2\mathrm{d}\bm{r}} + \sum_{i<j}^{N}\frac{\int_{\Omega}\frac{|\Psi|^2}{|\bm{r}_i-\bm{r}_j|}\mathrm{d}\bm{r}}{\int_{\Omega}|\Psi|^2\mathrm{d}\bm{r}}   \\ 
&\triangleq E_{\text{K}}+E_{ii}- E_{ie} + E_{ee},
\end{aligned}
\label{eq::Energydef}
\end{equation}
where the four parts represent the kinetic term, the ion-ion interaction term, the ion-electron term, as well as the electron-electron term, respectively. For the low-rank tensor structure of Eq.~\eqref{eq::TNN_structure_explain}, the gradient operator can be conveniently applied to the basis functions of each separated dimension. This allows for the rapid computation of $E_{\text{K}}$ using one-dimensional integrals, in a manner fundamentally consistent with Eq.~\eqref{eq::Mass}. Under the Born-Oppenheimer approximation, the ion-ion interaction term $E_{ii}$ is a constant value that is independent of the electronic distribution and then its exact value can be precomputed. For the electron-related interactions in Eq.~\eqref{eq::Energydef}, $E_{ie}$ and $E_{ee}$, the presence of the Coulomb term in the integral prevents the corresponding three- and six-dimensional integrals from being decomposed into a product of one-dimensional integrals. This becomes the computational bottleneck for solving the Schr\"{o}dinger equation within the TNN framework. In what follows, we describe how the SOG decomposition is introduced to address the problem.

\subsection{Tensor representation via the SOG decomposition}\label{sec::TRSOG}
The key step of the TNN framework is to reduce all high-dimensional integrals to a low-rank sum of products of one-dimensional integrals. For the purpose, the Coulomb interaction in the Schr\"{o}dinger equation must also be treated in a way that facilitates computation using the tensor structure. The SOG decomposition is useful to achieve this goal. The SOG has outstanding performance in field of molecular dynamics simulations \cite{predescu2020,RBSOG,Gao2024Fast,chen2025random,ji2025machine}, and the spatial separability of a Gaussian makes it highly compatible with the TNN framework.  The Coulomb kernel has the integral identity
\begin{equation}\label{eq::SOG_integral}
\frac{1}{r} = \frac{1}{\sqrt{\pi}} \int_{-\infty}^{\infty} e^{-e^t r^2 + \frac{1}{2} t} \mathrm{d}t=\frac{2}{\sqrt{\pi}}\int_{0}^{\infty}e^{-x^2r^2}\mathrm{d}x.
\end{equation} 
By a geometrically spaced quadrature with nodes $x_\ell=1/(\sqrt{2}b^{\ell}\sigma)$ for $\ell\in \mathbb{Z}$ and $b>1$, an approximation with a SOG series holds,
\begin{equation}
\frac{1}{r} \approx \frac{2\ln b}{\sqrt{2\pi}\sigma} \sum_{\ell=-\infty}^{\infty} \frac{1}{b^\ell} \exp\!\left[-\frac{1}{2}\left(\frac{r}{b^\ell\sigma}\right)^2\right].
\label{eq:SOGof1/r}
\end{equation}
This is the bilateral series approximation (BSA) \cite{beylkin2005, beylkin2010} for the Coulomb potential.
The relative error of the BSA is asymptotically bounded by \cite{predescu2020}
\begin{align}
\left|1 - \frac{2r\ln b}{\sqrt{2\pi}\sigma} \sum_{\ell=-\infty}^{\infty} \frac{1}{b^\ell} \exp\!\left[-\frac{1}{2}\left(\frac{r}{b^\ell\sigma}\right)^2\right]\right|
\lesssim 2\sqrt{2}\exp\!\left(-\frac{\pi^2}{2\ln b}\right).\
\label{eq::sog_error}
\end{align}
If one truncates the infinite series \eqref{eq:SOGof1/r} such that the summation is 
from $-M_2$ to $M_1$, the effective approximation range is 
roughly $\left[\sqrt{2}b^{-M_2}\sigma, \sqrt{2}b^{M_1}\sigma\right]$ \cite{LIANG2025101759}.
Based on these results, a corresponding set of SOG parameters can be determined for any given accuracy.

By the SOG decomposition, we first address the high-dimensional integral arising from the ion-electron interaction term $E_{ie}$. Based on the TNN structure, the integral for the interaction between the $i$th electron and the $k$th ion can be approximated by
\begin{equation}\label{eq::E_ie}
\int_{\Omega}\frac{|\Psi|^2}{|\bm{r}_i-\bm{R}_k|}\mathrm{d}\bm{r}=\sum_{t_1,t_2=1}^{p}\alpha_{t_1}\alpha_{t_2}
\int_{\Omega_0^{\otimes 3}}\frac{\phi_{i,t_1}(\bm{r}_{i})
\phi_{i,t_2}(\bm{r}_{i})}{|\bm{r}_i-\bm{R}_k|}
\mathrm{d}\bm{r}_i\prod_{j\neq i}\prod_{\eta\in \{x,y,z\}}
\int_{\Omega_0}\phi_{j,t_1}^{\eta}(\eta_j)\phi_{j,t_2}^{\eta}(\eta_j)\mathrm{d}\eta_j, 
\end{equation}
where $\phi_{i,t}(\bm{r}_{i})=\phi_{i,t}^x(x_{i})\phi_{i,t}^y(y_{i})\phi_{i,t}^z(z_{i})$ is also separable. The quantity that needs to be handled is the three-dimensional integral for each electron $\bm{r}_i$. We employ the SOG approximation as Eq.~\eqref{eq:SOGof1/r} with truncation $(M_1,M_2)$ for the Coulomb interaction such that

\begin{equation}
\frac{1}{|\bm{r}_i-\bm{R}_k|} \thickapprox\sum_{\ell=-M_2}^{M_1} w_{\ell}G_{\ell}(x_i - X_k)G_{\ell}(y_i - Y_k)G_{\ell}(z_i - Z_k),
\label{eq::ie_SOG}
\end{equation}
where  $w_{\ell}=4\pi s_{\ell}^2\ln{b}$ is the weight, and $G_{ \ell}(x) = (1/\sqrt{2\pi} s_{\ell}) \exp\left( -x^2/2s_{\ell}^2\right)$
is the Gaussian function of bandwidth $s_{\ell}=b^{\ell}\sigma$. Substituting Eq.~\eqref{eq::ie_SOG} into Eq.~\eqref{eq::E_ie} leads us to

\begin{equation}\label{eq::ie_split}
\begin{aligned}
\int_{\Omega_0^{\otimes 3}}\frac{\phi_{i,t_1}(\bm{r}_{i})\phi_{i,t_2}(\bm{r}_{i})}{|\bm{r}_i-\bm{R}_k|}\mathrm{d}\bm{r}_i&\approx\sum_{\ell=-M_2}^{M_1}w_{\ell}\prod_{\eta\in\{x,y,z\}}\int_{\Omega_0}\phi_{i,t_1}^{\eta}(\eta_{i})\phi_{i,t_2}^{\eta}(\eta_{i})G_\ell(\eta_i-R_k^{\eta})\mathrm{d}\eta_i,
\end{aligned}
\end{equation}
where, for convenience, $R_k^{\eta}$ represents the $\eta$-component of $\bm{R}_k$, namely, $R_k^{x}=X_k$, $R_k^{y}=Y_k$ and $R_k^{z}=Z_k$. Thus, the contribution from the ion-electron term $E_{ie}$ can be obtained by computing a product of one-dimensional integrals, without requiring any high-dimensional integration. In a similar manner for the electron-electron interaction $E_{ee}$, one reduces the six-dimensional integral to the product of two-dimensional integrals,
\begin{equation}
\begin{aligned}\label{eq::ee_split}
&\int_{\Omega_0^{\otimes 6}}\frac{\phi_{i,t_1}(\bm{r}_{i})\phi_{j,t_1}(\bm{r}_{j})\phi_{i,t_2}(\bm{r}_{i})\phi_{j,t_2}(\bm{r}_{j})}{|\bm{r}_i-\bm{r}_j|}\mathrm{d}\bm{r}_i\mathrm{d}\bm{r}_j\\
\approx&\sum_{\ell=-M_2}^{M_1}w_{\ell}\prod_{\eta\in\{x,y,z\}}\int_{\Omega_0^{\otimes 2}}\phi_{i,t_1}^{\eta}(\eta_{i})\phi_{j,t_1}^{\eta}(\eta_{j})\phi_{i,t_2}^{\eta}(\eta_{i})\phi_{j,t_2}^{\eta}(\eta_{j})G_\ell(\eta_i-\eta_j)\mathrm{d}\eta_i\mathrm{d}\eta_j.
\end{aligned}    
\end{equation}
Thus, the SOG-TNN method has simplified the problem to the computation of only one- and two-dimensional integrals. In the following section, we develop fast algorithms to further 
speed up computing the two-dimensional integrals in the electron-electron interactions such that one has to calculate only a small portion of two-dimensional integrals. This approach will be the subject of Section~\ref{sec::Range_split}.

\section{Range-splitting acceleration}
\label{sec::Range_split}
In this section, we develop accelerating techniques for electron-electron interactions by exploiting the bandwidth-dependent intrinsic low-rank properties of Gaussian functions. Similar to Gao {\it et al.} \cite{Gao2024Fast}, we divide the indices of all the Gaussians into three sets. Depending on the values of $s_\ell$, these index sets are denoted in increasing order by $\mathcal{M}_{\text{short}}$, $\mathcal{M}_{\text{mid}}$ and $\mathcal{M}_{\text{long}}$, which represent the short-range, mid-range, and long-range components, respectively. This partitioning criterion can be explicitly determined by the  complexity analysis under the accuracy requirement.

\subsection{Asymptotic calculation for short-range components} 
For each $\ell\in \mathcal{M}_{\text{short}}$, the Gaussian distribution concentrates its mass near the origin. As $s_\ell\rightarrow 0$, by the moment expansion, one has the following asymptotic series for any smooth function $f(x)$,
\begin{equation}
\label{moments}
\int_{\Omega_0}f(x)G_{\ell}(x)dx\sim \sum_{k =0}^{+\infty}\frac{f^{(2k )}(0)}{(2k )!!}s_{\ell}^{2k }. 
\end{equation}
Employing this expansion to the integral in Eq. \eqref{eq::ee_split} and taking the leading term, one obtains,  
\begin{equation}
    \label{eq::short_approx_integral}
    \begin{aligned}
    &\int_{\Omega_0^{\otimes 2}}\phi_{i,t_1}^{\eta}(\eta_{i})\phi_{j,t_1}^{\eta}(\eta_{j})\phi_{i,t_2}^{\eta}(\eta_{i})\phi_{j,t_2}^{\eta}(\eta_{j})G_\ell(\eta_i-\eta_j)\mathrm{d}\eta_i\mathrm{d}\eta_j\\
    \approx&\int_{\Omega_0}\phi_{i,t_1}^{\eta}(\eta_i)\phi_{j,t_1}^{\eta}(\eta_i)\phi_{i,t_2}^{\eta}(\eta_i)\phi_{j,t_2}^{\eta}(\eta_i)\mathrm{d}\eta_i\triangleq\mathcal{I}_{t_1,t_2,i,j}^{\eta,\text{short}}.
    \end{aligned}
\end{equation}
By this approximation, one represents the short-range part of the electron-electron energy by a series of the product of one-dimensional integrals, expressed by, 
\begin{equation}
    \label{eq::Short_contribution}
E_{ee}^{\text{short}}=\left(\sum_{\ell\in\mathcal{M}_{\text{short}}}w_{\ell}\right)\sum_{t_1,t_2=1}^{p}\alpha_{t_1}\alpha_{t_2}\sum_{i<j}^{N}\left(\mathcal{J}_{t_1,t_2,i,j}\cdot\prod_{\eta\in\{x,y,z\}}\mathcal{I}_{t_1,t_2,i,j}^{\eta,\text{short}}\right),
\end{equation}
where 
\begin{equation}
    \label{eq::J}
    \mathcal{J}_{t_1,t_2,i,j}:=\prod_{k\neq i,j}^{N}\prod_{\eta\in\{x,y,z\}}\int_{\Omega_0}\phi_{j,t_1}^{\eta}(\eta_k)\phi_{j,t_2}^{\eta}(\eta_k)\mathrm{d}\eta_k
\end{equation}
is the remaining tensor-product term except for the electron-electron interaction. One can see that Eq.~\eqref{eq::Short_contribution} is the result of the linear superposition of all the Gaussians at the zero-bandwidth limit. It allows to extend the summation to include all negative indices (effectively setting $M_2=\infty$) such that the approximation error of $E_{ee}^{\text{short}}$ is only from the asymptotic expansion. This strategy for the short-range component can likewise be applied to the ion-electron interaction described in Eq.~\eqref{eq::ie_split}. Thus, the SOG-TNN method can completely resolve the singularity issue of the Coulomb kernel, which is crucial to achieving stable and high-precision computation for high-dimensional Schr\"{o}dinger equation.

\subsection{Chebyshev tensorization for long-range components}
For Gaussians with $\ell\in \mathcal{M}_{\text{long}}$, they are smooth and vary slowly within domain $\Omega_0$. Inspired by the idea of the fast Gauss transform \cite{greengard1991fast,chebyshev,greengard2024new}, their intrinsic low-rank nature can be effectively captured by an expansion in orthogonal Chebyshev polynomials. Specifically, one has the following Chebyshev expansion
\begin{equation}\label{eq:chebyshev_expansion}
G_{\ell}\left(x-y\right)=\sum_{m,n=0}^{+\infty} C_{m,n}^{r_c}(\ell) T_m\left(\frac{x}{r_c}\right)T_n\left(\frac{y}{r_c}\right),\quad x,y\in \Omega_0, 
\end{equation}
where $T_n(x)=\cos(n\arccos x)$ denotes the $n$th Chebyshev polynomial, and the coefficients can be precomputed by
\begin{equation}\label{eq:Chebycoeff}
    C_{m,n}^{r_c}(\ell)=\frac{\gamma_m\gamma_n} {\pi^2}\int_{[-1,1]^2}\frac{r_c}{\sqrt{2\pi}s_{\ell}}\exp\left(-\frac{r_c^2(x-y)^2}{2s_{\ell}^{2}}\right)T_m(x)T_n(y)\omega(x)\omega(y)\mathrm{d}x\mathrm{d}y 
\end{equation}
with the weight function $\omega(x)=1/\sqrt{1-x^2}$, $\gamma_0=1$ and $\gamma_k=2$ for $k\ge 1$. If we truncate the infinite sum of Eq.~\eqref{eq:chebyshev_expansion} with $m+n\le S$ for $S\in \mathbb{N}^{+}$, the asymptotic error bound for $x,y\in \Omega_0$ can be estimated by \cite{chebyshev}
\begin{equation}\label{eq:cheb_error}
\left|G_{\ell}\left(x-y\right)- \sum_{|m+n|\le S} C_{m,n}^{r_c}(\ell) T_m\left(\frac{x}{r_c}\right)T_n\left(\frac{y}{r_c}\right)\right| \lesssim \frac{r_c^S}{\sqrt{2}(2s_{\ell} S^{1/2})^{S}},\  \text{as}\  S\rightarrow+\infty,
\end{equation}
which converges factorially fast with respect to $S$ for $\ell \in \mathcal{M}_{\text{long}}$. The Chebyshev expansion imparts a low-rank sum-of-tensor-product structure to the Gaussian function within the two-dimensional integral of Eq.~\eqref{eq::ee_split}. As a result, the long-range component of the electron-electron interaction can be efficiently evaluated, since it is reduced to a series of one-dimensional integrals,
\begin{equation}
    \label{eq::Long_contribution}
E_{ee}^{\text{long}}=\sum_{\ell\in\mathcal{M}_{\text{long}}}w_{\ell}\sum_{t_1,t_2=1}^{p}\alpha_{t_1}\alpha_{t_2}\sum_{i<j}^{N}\left[\mathcal{J}_{t_1,t_2,i,j}\cdot\prod_{\eta\in\{x,y,z\}}\mathcal{I}_{t_1,t_2,i,j}^{\eta,\ell,\text{long}}\right],
\end{equation}
where
\begin{equation}\label{eq::long_2Dintegral}
\begin{aligned}
&\mathcal{I}_{t_1,t_2,i,j}^{\eta,\ell,\text{long}}\\
&:=\sum_{|m+n|\le S}C_{m,n}^{r_c}(\ell)\int_{\Omega_0}\phi_{i,t_1}^{\eta}(\eta_{i})\phi_{i,t_2}^{\eta}(\eta_{i})T_m\left(\frac{\eta_i}{r_c}\right)\mathrm{d}\eta_i\cdot\int_{\Omega_0}\phi_{j,t_1}^{\eta}(\eta_{j})\phi_{j,t_2}^{\eta}(\eta_{j})T_n\left(\frac{\eta_j}{r_c}\right)\mathrm{d}\eta_j
\end{aligned}
\end{equation}
becomes rank-$\tbinom{S+1}{2}$ product of two one-dimensional integrals, which avoids the computation of the original two-dimensional integral.  

\subsection{Model reduction for mid-range components} Consider how to compute those Gaussians with $\ell\in \mathcal{M}_{\text{mid}}$. The moderate bandwidth of these Gaussian functions presents a dilemma for efficient approximation. On one hand, treating them asymptotically introduces a significant approximation error. On the other hand, a Chebyshev expansion would require a large truncation number, making the computational cost prohibitively high. Although the number of Gaussians in this mid-range set $\mathcal{M}_{\text{mid}}$ is typically small, the direct evaluation of the two-dimensional numerical integrals in Eq.~\eqref{eq::ee_split} for these terms still constitutes the primary computational bottleneck. Fortunately, the complex structure of these mid-range interactions can be well settled by model reduction based on the singular value decomposition (SVD).

Consider the two-dimensional integral in Eq.~\eqref{eq::ee_split}. By applying the Gauss-Legendre quadrature to each dimension, one has
\begin{equation}
    \label{eq::mid_num_integral}
    \begin{aligned}
        \mathcal{I}_{t_1,t_2,i,j}^{\eta,\ell,\text{mid}}:=&\int_{\Omega_0^{\otimes 2}}\phi_{i,t_1}^{\eta}(\eta_{i})\phi_{j,t_1}^{\eta}(\eta_{j})\phi_{i,t_2}^{\eta}(\eta_{i})\phi_{j,t_2}^{\eta}(\eta_{j})G_\ell(\eta_i-\eta_j)\mathrm{d}\eta_i\mathrm{d}\eta_j\\
    \approx&\sum_{k,l=1}^{K}(\nu_k\phi_{i,t_1}^{\eta}(\hat{\eta}_k)\phi_{i,t_2}^{\eta}(\hat{\eta}_k))\cdot G_\ell(\hat{\eta}_k-\hat{\eta}_l)\cdot (\nu_l\phi_{j,t_1}^{\eta}(\hat{\eta}_l)\phi_{j,t_2}^{\eta}(\hat{\eta}_l))\\
    =&(\bm{\nu}\circ\bm{\phi}_{i,t_1}^{\eta}\circ\bm{\phi}_{i,t_2}^{\eta})^\top\bm{G}_{\ell}(\bm{\nu}\circ\bm{\phi}_{j,t_1}^{\eta}\circ\bm{\phi}_{j,t_2}^{\eta}),\\
    \end{aligned}
\end{equation}
where $\bm{\hat{\eta}}=(\hat{\eta}_1,\cdots,\hat{\eta}_K)$ denote the quadrature nodes for each dimension, $\bm{\nu}=(\nu_1,\cdots,\nu_K)$ are the weights,  $\bm{\phi}_{i,t}^\eta=(\phi_{i,t}^\eta(\hat{\eta}_1),\cdots,\phi_{i,t}^\eta(\hat{\eta}_K))$ are the corresponding TNN values at the nodes, and ``$\circ$" denotes the Hadamard product. The kernel matrix $\bm{G}_{\ell}\in\mathbb{R}^{K\times K}$ with the $(k,l)$-element being $G_{\ell}(\hat{\eta}_k-\hat{\eta}_l)$. One applies the SVD such that  $\bm{G}_{\ell}=\boldsymbol{U}_\ell \boldsymbol{\Sigma}_\ell \boldsymbol{V}_\ell^\top$,
where the diagonal matrix $\boldsymbol{\Sigma}_\ell = \mathrm{diag}\{\lambda_1^{\ell}, \cdots, \lambda_K^{\ell}\}$  stores its singular values in a descending order, and $\bm{U}_\ell$ and $\bm{V}_\ell\in \mathbb{R}^{K\times K}$ are orthogonal. The model reduction then is applied for truncating $\bm{\Sigma}_{\ell}$ into $\widetilde{\bm{\Sigma}}_\ell=\mathrm{diag}\{\lambda_1^{\ell}, \cdots, \lambda_{r_\ell}^{\ell}\}$, retaining only those $\lambda_i^{\ell}$ that greater than the required accuracy. The two orthogonal matrices become $\widetilde{\bm{U}}_\ell$ and $\widetilde{\bm{V}}_\ell\in \mathbb{R}^{K\times r_\ell}$.  Eq.~\eqref{eq::mid_num_integral} is then approximated by
\begin{equation}
    \label{eq::mid_2Dintegral}
\mathcal{I}_{t_1,t_2,i,j}^{\eta,\ell,\text{mid}}\approx (\widetilde{\bm{U}}_\ell^\top\bm{\nu}\circ\bm{\phi}_{i,t_1}^{\eta}\circ\bm{\phi}_{i,t_2}^{\eta})^{\top}\widetilde{\bm{\Sigma}}_\ell(\widetilde{\bm{V}}_\ell^{\top}\bm{\nu}\circ\bm{\phi}_{j,t_1}^{\eta}\circ\bm{\phi}_{j,t_2}^{\eta}).
\end{equation}
In Figure~\ref{fig::SVD}, we show singular value distribution for several typical kernel matrices $\bm{G}_{\ell}$ for different bandwidth $s_\ell$. The rapid decay of these singular values demonstrates the effectiveness of SVD as a model reduction method for the mid-range Gaussian components.

\begin{figure}[!ht] 
    \centering    \includegraphics[width=0.9\textwidth]{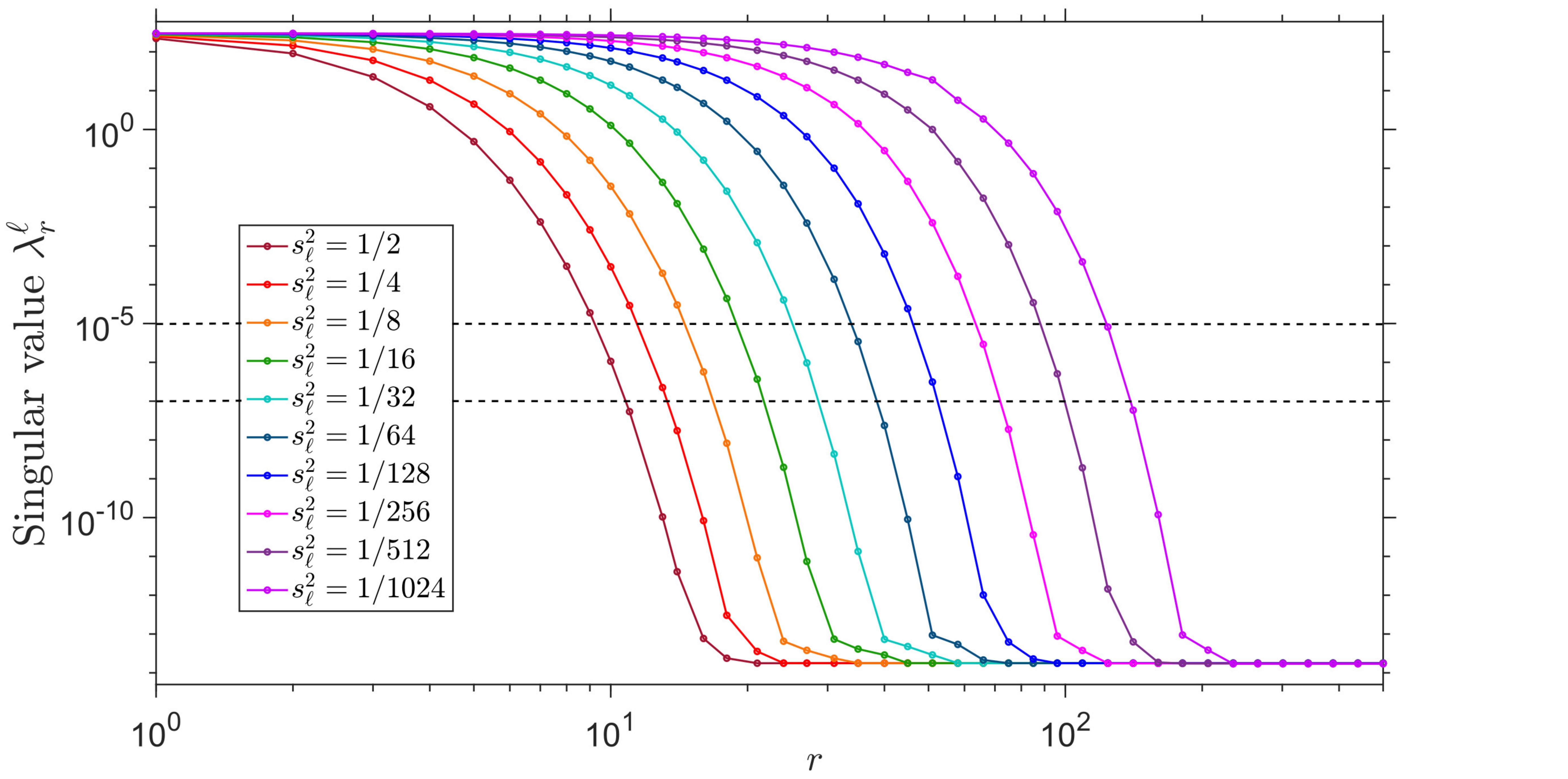}
    \caption{Singular value distribution of kernel matrices $\bm{G}_{\ell}$ consists of $600$ proxy points chosen as composite Gauss-Legendre nodes with respect to different bandwidth $s_\ell$.}
    \label{fig::SVD}
\end{figure}

Finally, the mid-range energy component reads as follows
\begin{equation}
    \label{eq::Mid_contribution}
E_{ee}^{\text{mid}}=\sum_{\ell\in\mathcal{M}_{\text{mid}}}w_{\ell}\sum_{t_1,t_2=1}^{p}\alpha_{t_1}\alpha_{t_2}\sum_{i<j}^{N}\left[\mathcal{J}_{t_1,t_2,i,j}\cdot\prod_{\eta\in\{x,y,z\}}\mathcal{I}_{t_1,t_2,i,j}^{\eta,\ell,\text{mid}}\right].
\end{equation}

\subsection{Summary of the algorithm and complexity analysis}

Combining Eqs.~\eqref{eq::Short_contribution}, \eqref{eq::Long_contribution} and \eqref{eq::Mid_contribution} together results in a complete solution for calculating $E_{ee}$.  
The most important feature of the SOG-TNN architecture is that the integration over each spatial dimension is decoupled. The separability makes the method naturally suited for anisotropic systems, such as those where the physical domain exhibits a large aspect ratio. This outcome reveals the great potential of the SOG-TNN for solving more complex systems such as long-chain molecules. The complete algorithm for the SOG-TNN solver with range-splitting acceleration for the high-dimensional Schr\"{o}dinger equation is summarized in Algorithm~\ref{alg:algorithm_SOGTNN}. Here, parameters $\boldsymbol{\theta}$ of the neural network are optimized with antisymmetry constraints with the TNN basis function space being pushed towards the antisymmetric sector. Subsequently, Hamiltonian-based generalized eigenvalue problem  optimizes $\boldsymbol{\alpha}$ within the fixed $\boldsymbol{\theta}$-subspace. This hybrid strategy is designed as an adaptive subspace construction, theoretically supported by the framework of Galerkin neural networks~\cite{ainsworth2021galerkin}.

\begin{algorithm}[h]
\caption{The SOG-TNN solver for the Schr\"{o}dinger equation}
\label{alg:algorithm_SOGTNN}
\begin{algorithmic}[1]  

\Require Configurations of the quantum system;  initial model parameters $\bm{\theta}^{(0)}$; 
TNN architecture $\Psi(\bm{r}; \bm{\theta}^{(0)})$ 
defined by \eqref{eq::TNN_structure_explain}; cutoff $r_c$; 
error tolerance $\epsilon$.

\State Based on $\epsilon$, select the number of integral nodes $K$, 
and the TNN and SOG parameters. Select the number of training iterations $N_{\text{epoch}}$, 
parameters for spin-up and spin-down pairs $\lambda_{ij}^\uparrow$ and $\lambda_{ij}^\downarrow$, 
and  the learning rate $\gamma$.
\State Calculate the ion-ion interaction $E_{ii}$ defined in Eq.~\eqref{eq::Energydef}, 
which remains constant in the whole process.
\For {$\kappa =1:N_{\text{epoch}}$}
\State Construct the subspace $V_{\mathcal{T}}^{p,(\kappa -1)}$ with $\bm{\theta}^{(\kappa -1)}$ by Eq.~\eqref{eq::basis_function_space}, and calculate the stiffness and the mass matrices 
$\left ( \left \langle \Phi_{t_1}^{(\kappa -1)} | \hat{H} | \Phi_{t_2}^{(\kappa -1)} \right \rangle  \right ) _{p\times p}$ and $ \left (  \langle \Phi_{t_1}^{(\kappa -1)} | \Phi_{t_2}^{(\kappa -1)} \rangle  \right ) _{p\times p},$ respectively. The SOG approximation is applied on the calculation of the  stiffness matrix, where $E_{ie}$ is calculated by Eq.~\eqref{eq::E_ie}, and $E_{ee}$ is calculated by Eqs.~\eqref{eq::Short_contribution}, \eqref{eq::Long_contribution}, and \eqref{eq::Mid_contribution}.
\State Solve the generalized eigenvalue problem
$$
\left( \left\langle \Phi_{t_1}^{(\kappa -1)} \middle| \hat{H} \middle| \Phi_{t_2}^{(\kappa -1)} \right\rangle \right)_{p \times p} \bm{\alpha}^{(\kappa -1)} = \bm{E}^{(\kappa -1)} \left( \left\langle \Phi_{t_1}^{(\kappa -1)} \middle| \Phi_{t_2}^{(\kappa -1)} \right\rangle \right)_{p \times p},
$$
by transforming it into a standard eigenvalue problem using Cholesky decomposition.
\State Construct the approximate solution $\Psi_{\mathcal{T}}^{p}(\bm{x};\bm{\theta}^{(\kappa -1)})$ with $\bm{\alpha}^{(\kappa -1)}$ by Eq.~\eqref{eq::TNN_structure_explain}, then compute the loss function $\mathcal{L}[\Psi^{(\kappa -1)}]$ as Eq.~\eqref{eq::loss}.

\State Update TNN parameters from $\bm{\theta}^{(\kappa-1)}$ to $\bm{\theta}^{(\kappa)}$ using some deterministic optimization steps based on the Adam optimizer.
\EndFor{}
\Ensure The ground state eigenvalue $\mathscr{E}[\Psi](\bm{\theta}^{(N_{\text{epoch}})})$ and wavefunction $\Psi(\bm{r};\bm{\theta}^{(N_{\text{epoch}})})$. 
\end{algorithmic}
\end{algorithm}

Let us analyze the computational complexity of a single-step calculation using the proposed SOG-TNN.  In Step 4, the complexity of computing the mass matrix is $\mathcal{O}(3p^2NKT_{\text{eval}})$, where $T_{\text{eval}}$ denotes the computational complexity for the one dimensional function evaluation operations. The computation of the stiffness matrix is divided into three main components, $E_{\text{K}}$, $E_{ie}$ and $E_{ee}$. The kinetic energy term $E_{\text{K}}$ involves the integral of the inner product of the gradients of $\Psi$. Since the gradient operation preserves the tensor structure, this term can be assembled simply by computing the inner products of the differentiated one-dimensional basis functions, resulting in the same computational complexity of $\mathcal{O}(9p^2NKT_{\text{eval}})$. The ion-electron interaction contribution $E_{ie}$ is computed according to Eqs.~\eqref{eq::E_ie} and \eqref{eq::ie_split}, and has a computational complexity of
\begin{equation}\label{eq::Complexity_ie}
\mathcal{C}_{ie}=\mathcal{O}\left(3p^2MN+3|\mathcal{M}|p^2MNKT_{\text{eval}}\right),\  \mathcal{M}:=\mathcal{M}_{\text{mid}}\cup\mathcal{M}_{\text{long}}.
\end{equation} 
The electron-electron interaction contribution $E_{ee}$ is composed of the three contributions by Eqs.~\eqref{eq::Short_contribution}, \eqref{eq::Long_contribution}, and \eqref{eq::Mid_contribution}, and the total computational complexity is
\begin{equation}\label{eq::Complexity_ee}
\mathcal{C}_{ee}=\mathcal{O}\left(3p^2N^2KT_{\text{eval}}+(3|\mathcal{M}_{\text{long}}|(S+1)p^2NKT_{\text{eval}}+S^2N^2)+3\sum_{\ell\in\mathcal{M}_{\text{mid}}}p^2N^2r_\ell KT_{\text{eval}}\right),
\end{equation}
where the three terms with the big $\mathcal{O}$ correspond to the short-, long- and mid-range calculations, respectively. Specifically, for the computation of the long-range contribution in Eq.~\eqref{eq::long_2Dintegral}, the summation of the Chebyshev expansion is handled efficiently. First, the constituent one-dimensional integrals within the term $\mathcal{I}_{t_1,t_2,i,j}^{n,\ell,\text{long}}$ are pre-computed and stored. The final result is then assembled through an $O(S^2N^2)$ linear combination of these integrals, further reducing the overall computational cost for this component. Step 5 requires solving a generalized eigenvalue problem. Both the Cholesky decomposition and the solution of the subsequently transformed standard eigenvalue problem have a computational complexity of $\mathcal{O}(p^3)$ \cite{golub2013matrix}. Step 6 then involves the computation of the penalty term in Eq.~\eqref{eq::loss} to enforce the Pauli exclusion principle, which has a complexity of $\mathcal{O}(3p^2N^2KT_{\text{eval}})$.

To summarize the above analysis, the total computational complexity of the SOG-TNN method for one step is 
\begin{equation}\label{eq::total_complexity}
\mathcal{C}_{\text{total}}=\mathcal{O}\left(p^3+S^2N^2+(S|\mathcal M_{\rm long}|+N|\mathcal M_{\rm mid}|+M|\mathcal{M}|)p^2KNT_{\text{eval}}\right),  
\end{equation}
which shows that the architecture overcomes the challenge of high-dimensional integrations. Furthermore, the Gaussians are also well-suited for sparse storage, thus yielding substantial savings in memory overhead. Consequently, it is not difficult to estimate the order of the memory cost by the SOG-TNN method, which is 
$\mathcal{O}\left(pNK+p^2+K+S^2+SK+M\right).$
Moreover, the computational complexity of the SOG-TNN linearly depends on the number of Gaussians, making the method amenable to further acceleration by reducing the number of Gaussians. Model reduction techniques can be applied to construct optimized SOG expansions; see, for example, \cite{greengard2018anisotropic,kammler1977prony,beylkin2019computing,
gao2022kernel,lin2025weighted} for a review of such methods.

\subsection{Enforcement of antisymmetry constraints}

The above SOG-TNN framework incorporates the wavefunction antisymmetry as a soft constraint via an additional penalty in the loss function. Actually, it is also possible to construct an antisymmetric TNN ansatz that strictly preserves the antisymmetry. Let $\mathcal{A}:H^m(\Omega;\mathbb{R}^{n})\rightarrow H^{m}(\Omega^n)$ denote the antisymmetrization operator for any $n\in \mathbb{N}_{+}$ and $\bm{\Phi}=[\phi_1(r_1),\cdots,\phi_n(r_n)]^{\top}$, such that one has
\begin{equation}
    \label{eq::antisym_op}
    \mathcal{A}\left(\bm{\Phi}\right)=\frac{1}{n!} \begin{vmatrix}
\phi_{1}(r_1)  & \dots &\phi_{1}(r_{n}) \\
 \vdots & \ddots  & \vdots\\
\phi_{n}(r_1)  & \dots &\phi_{n}(r_{n})
\end{vmatrix}.
\end{equation}
This operator originates from the Slater determinant \cite{slater1929theory}. By acting on all electrons with the same spin, it inherently preserves the antisymmetry. Analogous to the Hartree-Fock-like approaches in literature~\cite{fock1930naherungsmethode,beylkin2008approximating}, the TNN ansatz of the wavefunction $\Psi$ can be reformulated as
\begin{equation}   \label{eq::TNN_Slater}
    \Psi\approx \Psi_{\mathcal{A}}^{p}=\sum_{t=1}^{p}\alpha_t\mathcal{A}\left(\bm{\Phi}_t^{\uparrow}\right)\mathcal{A}\left(\bm{\Phi}_t^{\downarrow}\right),
\end{equation}
where
\begin{equation}
    \label{eq::Phi_up_down} \bm{\Phi}_t^{\uparrow}=\left[(\phi_{\tau_{\uparrow}(i),t}(r_{\tau_{\uparrow}(i)}))_{i=1}^{N_{\uparrow}}\right]^{\top},\quad \hbox{and} \quad \bm{\Phi}_t^{\downarrow}=\left[(\phi_{\tau_{\downarrow}(j),t}(r_{\tau_{\downarrow}(j)}))_{j=1}^{N_{\downarrow}}\right]^{\top}
\end{equation}
denote the spin-up and spin-down TNN component vectors, respectively. Notably, this antisymmetric TNN ansatz preserves the structure of the individual sub-neural network shown in Figure~\ref{fig:schema}, with the action of $\mathcal{A}$ being applied solely at the final assembly stage. Under this enforced antisymmetric ansatz, the high-dimensional integration within SOG-TNN relies on the validity of the L\"owdin rules \cite{Lowdin1955}, that is,  for $\bm{\Phi}$ and $\bm{\widetilde{\Phi}}$ with the same dimension,
\begin{equation}
    \label{eq::Lowdin}
    \langle \mathcal{A}\left(\bm{\Phi}\right)|\mathcal{A}(\widetilde{\bm{\Phi}}) \rangle = \frac{1}{n!}|\bm{\mathcal{M}}(\bm{\Phi},\bm{\widetilde{\Phi}})|,\quad \bm{\mathcal{M}}(\bm{\Phi},\bm{\widetilde{\Phi}})=\begin{bmatrix}
\langle\phi_{1} | \widetilde{\phi}_{1} \rangle  & \dots &\langle\phi_{1} | \widetilde{\phi}_{n} \rangle \\
 \vdots & \ddots  & \vdots\\
\langle\phi_{n} | \widetilde{\phi}_{1} \rangle  & \dots &\langle\phi_{n} | \widetilde{\phi}_{n} \rangle
\end{bmatrix}, 
\end{equation}
which allows us to calculate the space inner product element-by-element in the determinant. We denote $\bm{\mathcal{M}}(\bm{\Phi}_{t_1}^{\uparrow},\bm{\Phi}_{t_2}^{\uparrow}) := \bm{\mathcal{M}}_{t_1,t_2}^{\uparrow}$ and define the biorthogonalization as $\bm{\Xi}_{t_1,t_2}^{\uparrow} = (\bm{\mathcal{M}}_{t_1,t_2}^{\uparrow})^{-1}\bm{\Phi}_{t_1}^{\uparrow}$; corresponding definitions hold for the spin-down pairs. To distinguish the components of the column vector $\bm{\Xi}_{t_1,t_2}^{\uparrow}$, we denote its $j$-th element as $\xi_{t_1,t_2,\tau_{\uparrow}(j)}(r_{\tau_{\uparrow}(j)})$. Consequently, by substituting the antisymmetric TNN ansatz Eq.~\eqref{eq::TNN_Slater} into Eq.~\eqref{eq::Rayleigh} for the Hamiltonian calculation, the computational expressions for each $(t_1,t_2)$ pair can be derived as 
\begin{equation}
\label{eq::Slater_all}
\begin{aligned}
\left\langle\Psi|\Psi\right\rangle_{t_1,t_2}&=\frac{1}{N_{\uparrow}! \cdot N_{\downarrow}!}|\bm{\mathcal{M}}_{t_1,t_2}^{\uparrow}|
|\bm{\mathcal{M}}_{t_1,t_2}^{\downarrow}|,\\
\left\langle\nabla_i\Psi|\nabla_i\Psi\right\rangle_{t_1,t_2}
&=\left\langle\Psi|\Psi\right\rangle_{t_1,t_2}\left\langle\nabla_i\xi_{t_1,t_2,i}
|\nabla_i\phi_{t_2,i}\right\rangle,\\
\left\langle\Psi\left|\frac{Q_k}{|\bm{r}_i-\bm{R}_k|}\right|\Psi\right\rangle_{t_1,t_2}&=\left\langle\Psi|\Psi\right\rangle_{t_1,t_2} \left\langle\xi_{t_1,t_2,i}\left|\frac{Q_k}{|\bm{r}_i-\bm{R}_k|}\right|\phi_{t_2,i}\right\rangle,\\
\end{aligned}
\end{equation}
and 
\begin{equation}
\label{eq::Slater_ee}
\begin{aligned}
&\left\langle\Psi\left|\frac{1}{|\bm{r}_i-\bm{r}_j|}\right|\Psi\right\rangle_{t_1,t_2}\\
=&\left\{\begin{aligned}
&\left\langle\Psi|\Psi\right\rangle_{t_1,t_2}\left\langle\begin{vmatrix}
\xi_{t_1,t_2,i}(r_i)  & \xi_{t_1,t_2,i}(r_j)\\
\xi_{t_1,t_2,j}(r_i)  & \xi_{t_1,t_2,j}(r_j)
\end{vmatrix}\left|\frac{1}{|\bm{r}_i-\bm{r}_j|}\right|\phi_{t_2,i}\phi_{t_2,j}\right\rangle,\quad \text{$i,j$ same spin}\\
&\left\langle\Psi|\Psi\right\rangle_{t_1,t_2}
\left\langle\xi_{t_1,t_2,i}\xi_{t_1,t_2,j}\left|\frac{1}{|\bm{r}_i-\bm{r}_j|}\right|\phi_{t_2,i}\phi_{t_2,j}\right\rangle,\quad \text{otherwise}.\\
\end{aligned}
\right.
\end{aligned}
\end{equation}
The detailed derivations for Eqs.~\eqref{eq::Slater_all} and \eqref{eq::Slater_ee} are provided in Appendix~\ref{app::derivation}. 
Notably, the enforced antisymmetric TNN ansatz maintains a separation-of-variables structure, inheriting the significant advantages of TNNs in handling high-dimensional integrals ~\eqref{eq::Mass}. The subsequent computation of the elementwise inner products is entirely consistent with Section~\ref{sec::Range_split}, utilizing the SOG decomposition to efficiently evaluate the Coulomb interaction contributions.

In the case of this antisymmetric TNN ansatz, the computational complexity scales as $\mathcal{O}(p^2N^3)$. Although this represents an additional $\mathcal{O}(N)$ factor compared to the soft-constrained SOG-TNN, it notably avoids the factorial complexity $\mathcal{O}(N!)$ associated with the explicit evaluation of the antisymmetrization operator.
For systems with a small number of electrons, the soft-constrained SOG-TNN scheme outlined in Algorithm~\ref{alg:algorithm_SOGTNN} incurs a lower computational overhead. However, as the system size increases, we need to face the challenge that the approximation capability of the fully separated tensor product regarding the antisymmetric space tends to diminish-a phenomenon observed in \cite{wang2025complexity}. Consequently, for systems involving more complex electron interactions, the hard-constrained SOG-TNN proposed in this subsection enables a more effective solution to the Schr\"odinger equation. In Section~\ref{sec::num_example}, we demonstrate the accuracy and efficiency of the method on the triplet states of helium atom system, and more results obtained with this version of SOG-TNN will be reported in our future work.

\section{Rate of the error convergence}\label{sec::error}
In this section, we present the way to select the concerned parameters within 
the SOG-TNN framework. For this aim, we come to analyze the energy error 
convergence with respect to the various system parameters. 

It should be noted that this analysis excludes the optimization error introduced during the neural network training, as this component is not the dominant source of the total error in most of numerical experiments \cite{wang2024solving,wang2024multi}.
Assume the real solution $\Psi_0$ and TNN-approximation $\Psi$ at each step are smooth, bounded and normalized in $L^2(\Omega)$ norm. 
We denote 
\begin{equation}\label{eq::error}
\mathcal{E}[\Psi]:=E_0-\mathscr{E}[\Psi](\bm{\theta}^{*})
\end{equation}
as the error functional with respect to the optimal TNN approximation $\Psi(\bm{r};\bm{\theta}^{*})$. This error has the following sources: 
spatial approximation error $\mathcal{E}_{\text{space}}$ arising from approximating the full solution space with the tensor space $V_{\mathcal{T}}^{p}$, SOG approximation error $\mathcal{E}_{\text{SOG}}$ due to the use of SOG approximation, and computational error $\mathcal{E}_{\text{elec}}$ due to the numerical method for calculating electron-related interaction in Section \ref{sec::Range_split}. 
It is important to note that, due to the antisymmetry by the Pauli exclusion principle, the wavefunction vanishes near the singularity when any two electrons coincide.  
Based on the strong approximation ability of neural networks
and the theoretical estimate Eq.~\eqref{eq::TNN_approx},  $\mathcal{E}_{\text{space}}$ can always be made sufficiently small provided that the TNN rank $p$ does not exceed $100$, as numerically reported in \cite{Wang2024Tensor,wang2024solving,WANG2024112928}.
Analogously to Eq.~\eqref{Hamiltonian_Operator}, under the SOG approximation the Hamiltonian operator reads 
\begin{equation} \label{eq::SOG_Hamiltonian}
\begin{aligned}
\hat{H}_{\text{SOG}} =& -\frac{1}{2}\sum_{i=1}^N \Delta_i+\sum_{k=1}^M\sum_{l=k+1}^M\frac{Q_kQ_l}{|\bm{R}_k-\bm{R}_l|}\\
&+\left[ \sum_{i=1}^N \sum_{j=i+1}^N \sum_{\ell=-\infty}^{M_1}w_{\ell}G_{\ell}(\bm{r}_i-\bm{r}_j)
-\sum_{k=1}^M\sum_{i=1}^N Q_k\sum_{\ell=-\infty}^{M_1}w_{\ell}G_{\ell}(\bm{r}_i-\bm{R}_k)\right],
\end{aligned}
\end{equation}
with the error estimate 
\begin{equation}\label{eq::error_SOG}
\begin{aligned}
\mathcal{E}_{\text{SOG}}&\le \int_{\Omega}\left | \hat{H}-\hat{H}_{\text{SOG}} \right | \left | \Psi \right |^2  \mathrm{d}\bm{r}  \\
 &\le \left[2\sqrt{2}\exp\left(-\frac{\pi^2}{2\ln b}\right)+\mathcal{\mathcal{O}}(b^{-M_1})\right](E_{ie}+E_{ee}),
\end{aligned}
\end{equation}
where the last inequality holds by Eq.~\eqref{eq::sog_error}, with the spectral convergence rate of truncation term $M_1$ similar with \cite{LIANG2025101759,Gao2024Fast}. For simplicity, here and hereafter, we neglect the dependence of the estimate on the TNN functions.

We now examine the contribution of the error term $\mathcal{E}_{\text{elec}}$. Based on the range-splitting method, $\mathcal{E}_{\text{elec}}$ can be further decomposed into three components, 
\begin{equation}\label{eq::ee_error}
\mathcal{E}_{\text{elec}}:=\mathcal{E}_{\text{short}}+\mathcal{E}_{\text{long}}+\mathcal{E}_{\text{mid}},
\end{equation}
which are due to
the short-range asymptotic, the long-range Chebyshev truncation, and the mid-range model reduction, respectively. And these errors can be analyzed independently.

We first examine the error $\mathcal{E}_{\text{short}}$, which arises due to the use of the leading asymptotics in Eq. \eqref{eq::short_approx_integral}. Lemma~\ref{lemma::asymptotic} quantifies the asymptotic error of the approximation.
\begin{lemma}
    \label{lemma::asymptotic}
    For one-dimensional analytic and bounded function $u(x)$, and a centralized normal distribution $g(x)=1/(\sqrt{2\pi}s)\exp(-x^2/(2s^2))$,
    then 
    \begin{equation}
        \label{eq::lemma1}
        \left|\int_{\mathbb{R}} u(y)\cdot g(x-y)\mathrm{d}y-u(x)\right|\le \sum_{k =1}^{+\infty } \frac{\left | u^{(2k )}(x) \right | }{(2k )!!} s^{2k }
    \end{equation}
    holds for all $x\in \mathbb{R}$.
\end{lemma}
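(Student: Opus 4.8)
The plan is to reduce the Gaussian convolution to a moment expansion of $u$ and then read off the series termwise. First I would substitute $w=y-x$ and exploit the evenness $g(-w)=g(w)$ to rewrite the left-hand integral as $\int_{\mathbb{R}}u(x+w)\,g(w)\,\mathrm{d}w$. Since $g$ is a probability density, $\int_{\mathbb{R}}g(w)\,\mathrm{d}w=1$, so the quantity to be estimated is exactly
\begin{equation*}
\int_{\mathbb{R}}\bigl[u(x+w)-u(x)\bigr]\,g(w)\,\mathrm{d}w.
\end{equation*}
Using analyticity of $u$, I would expand $u(x+w)=\sum_{n\ge 0}\frac{u^{(n)}(x)}{n!}w^{n}$ and integrate term by term against $g$, so that the whole problem is converted into computing the moments of the Gaussian.

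The key computation is the moment formula for the centered normal density of variance $s^{2}$: all odd moments vanish by symmetry, while $\int_{\mathbb{R}}w^{2k}g(w)\,\mathrm{d}w=(2k-1)!!\,s^{2k}$. Substituting this and invoking the elementary identity $(2k)!=(2k)!!\,(2k-1)!!$ collapses the coefficient $\tfrac{(2k-1)!!}{(2k)!}$ to $\tfrac{1}{(2k)!!}$, which yields the exact expansion
\begin{equation*}
\int_{\mathbb{R}}u(x+w)\,g(w)\,\mathrm{d}w=\sum_{k=0}^{\infty}\frac{u^{(2k)}(x)}{(2k)!!}\,s^{2k}.
\end{equation*}
The $k=0$ term equals $u(x)$, so subtracting it isolates the error as $\sum_{k\ge 1}\frac{u^{(2k)}(x)}{(2k)!!}s^{2k}$, and a single application of the triangle inequality delivers the claimed bound \eqref{eq::lemma1}. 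This recovers precisely the leading-order consistency of the moment series \eqref{moments} used for the short-range terms.

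The main obstacle is rigorously justifying the interchange of the infinite summation with the integration, since everything else is an exact identity. I would control this by an absolute-convergence (Fubini--Tonelli) argument: the Gaussian possesses finite absolute moments of every order, $\int_{\mathbb{R}}|w|^{n}g(w)\,\mathrm{d}w=\tfrac{2^{n/2}}{\sqrt{\pi}}\,\Gamma\!\bigl(\tfrac{n+1}{2}\bigr)\,s^{n}$, so it suffices that the Taylor coefficients $|u^{(n)}(x)|/n!$ decay rapidly enough to offset the $O(\sqrt{n!})$ growth of these absolute moments. This is exactly the control afforded by analyticity through Cauchy-type estimates on $u^{(n)}(x)$, which makes the double series $\sum_{n}\frac{|u^{(n)}(x)|}{n!}\int_{\mathbb{R}}|w|^{n}g(w)\,\mathrm{d}w$ convergent and legitimizes the termwise integration. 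An equivalent route, which sidesteps the explicit moment bookkeeping, is the heat-semigroup viewpoint: setting $t=s^{2}$, the convolution $U(x,t)=\int_{\mathbb{R}}u(y)g(x-y)\,\mathrm{d}y$ solves $\partial_{t}U=\tfrac12\partial_{xx}U$ with $U(x,0)=u(x)$, whence $\partial_{t}^{k}U|_{t=0}=\bigl(\tfrac12\partial_{xx}\bigr)^{k}u$ and the Taylor expansion of $U$ in $t$ reproduces the same series with coefficients $\tfrac{1}{2^{k}k!}=\tfrac{1}{(2k)!!}$; here the convergence question is transferred to the analyticity of $U$ in $t$ at the origin.
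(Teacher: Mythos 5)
Your argument is the same as the paper's: both subtract $u(x)$ using $\int g=1$, Taylor-expand $u$ about $x$, kill the odd Gaussian moments by symmetry, evaluate the even moments as $(2k-1)!!\,s^{2k}$ so that $(2k-1)!!/(2k)!=1/(2k)!!$, and finish with the triangle inequality. Your additional discussion of justifying the term-by-term integration (and the heat-semigroup alternative) goes beyond the paper's two-line computation but does not change the route.
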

\begin{proof}
The asymptotic error can be estimated by
\begin{equation}\label{eq::short_estimate}
\begin{aligned}
&\left|\int_{\mathbb{R}} u(y)\cdot g(x-y)\mathrm{d}y-u(x)\right|=\left|\int_{\mathbb{R}} (u(y)-u(x))\cdot g(x-y)\mathrm{d}y\right|\\&=\sum_{k =1}^{+\infty} \frac{ u^{(2k )}(x) }{(2k )!} \cdot\int_{\mathbb{R}}\frac{(y-x)^{2k }}{\sqrt{2\pi}s}e^{-(y-x)^2/2s^2}\mathrm{d}y\\&\leq \sum_{k =1}^{+\infty } \frac{ |u^{(2k )}(x)|  }{(2k )!!} s^{2k }.\\
\end{aligned}
\end{equation}
The last inequality obviously holds after exactly calculating all the moments. We then complete the proof. 
\end{proof}
Applying Lemma~\ref{lemma::asymptotic} into Eq.~\eqref{eq::Short_contribution}, we can derive an absolute error estimate
\begin{equation}\label{short_error_total}
\begin{aligned}
\mathcal{E}_{\text{short}}&\le\sum_{\ell\in\mathcal{M}_{\text{short}}}w_{\ell}
\sum_{t_1,t_2=1}^{p}|\alpha_{t_1}\alpha_{t_2}|\sum_{i<j}^{N}\left[|\mathcal{J}_{t_1,t_2,i,j}|
\cdot\sum_{k =1}^{+\infty}\frac{s_{\ell}^{2k }}{(2k )!!} |\mathcal{I}_{t_1,t_2,i,j}^{\eta,(2k )}|\right]
=\mathcal{O}(b^{4M_{\text{sm}}}),
\end{aligned}
\end{equation}
where $M_{\text{sm}}$ is the critical index that belongs to $\mathcal{M}_{\text{short}}$ and $M_{\text{sm}}+1$ belongs to $\mathcal{M}_{\text{mid}}$, and
\begin{equation}\label{eq::modulo}
|\mathcal{I}_{i,j}^{\eta,(2k )}|:=\left| \int_{\Omega_0}\phi_{i,t_1}^{\eta}(\eta_i)
\phi_{i,t_2}^{\eta}(\eta_i)\phi_{j,t_1}^{\eta,(2k )}(\eta_i)\phi_{j,t_2}^{\eta,(2k )}(\eta_i)d\eta_i \right|.
\end{equation}
This result demonstrates that the error in the total energy exhibits an exponential decay 
as the short-range partitioning index $M_{sm}$ decreases. This same decay rate holds 
for the estimate of the short-range ion-electron interaction term, 
which is computed according to Eqs.~\eqref{eq::E_ie} and \eqref{eq::ie_split}.

Regarding the truncation error of the long-range Chebyshev expansion $\mathcal{E}_{\text{long}}$,
By substituting the error estimate from Eq.~\eqref{eq:cheb_error} into the expression for the
long-range contribution in Eq.~\eqref{eq::Long_contribution}, we can deduce that
\begin{equation}
\label{long_error_total}
\begin{aligned}
\mathcal{E}_{\text{long}}&\le\sum_{\ell\in\mathcal{M}_{\text{long}}}w_{\ell}\sum_{i<j}^{N}
\int_{\Omega}\left|G_{\ell}\left(\eta_i-\eta_j\right)- \sum_{|m+n|\le S} C_{m,n}^{r_c}(\ell) 
T_m\left(\frac{\eta_i}{r_c}\right)T_n\left(\frac{\eta_j}{r_c}\right)\right|\left | \Psi \right |^{2} d\bm{r}\\
&\le \sum_{\ell\in\mathcal{M}_{\text{long}}}w_{\ell}\frac{N(N-1)r_c^S}{2\sqrt{2}(2s_{\ell} S^{1/2})^{S}}\\
&= \mathcal{O}\left(\frac{N^2r_c^{S}}{(2\sigma S^{1/2})^{S}}b^{(2-S)M_{ml}}\right),
\end{aligned}
\end{equation}
where $M_{ml}$ is the critical index that belongs to $\mathcal{M}_{\text{long}}$ and $M_{ml}-1$ belongs to $\mathcal{M}_{\text{mid}}$. 
Based on the a priori long-range error estimate in Eq.~\eqref{long_error_total}, one can adaptively choose the order of the Chebyshev expansion $S$ for each $\ell\in \mathcal{M}_{\text{long}}$.

Finally, for the mid-range component, the error $\mathcal{E}_{\text{mid}}$ originates from the truncation of the singular value matrix $\bm{\Sigma}_\ell$.   Assume that a $K\times K$ symmetric matrix $\bm{A}$ has the SVD decomposition $\bm{A}=\bm{U}\bm{\Sigma}\bm{V}^\top$, where $\bm{\Sigma}=\emph{\rm diag}\{\lambda_1,\cdots,\lambda_K\}$ is arranged in the descending order, and $\bm{U}$ and $\bm{V}$ are orthogonal matrices. Suppose that it is truncated as $\bm{\Sigma}_{r}=\emph{\rm diag}\{\lambda_1,\cdots,\lambda_r,0,\cdots,0\}$ with $r<K$, then for any $\bm{x},\bm{y}\in \mathbb{R}^{K}$, one has the error estimate \cite{golub2013matrix}
\begin{equation}\label{eq::estimate_SVD}
\left|\bm{x}^\top\bm{A}\bm{y}-\bm{x}^\top(\bm{U}\bm{\Sigma}_r\bm{V}^\top)\bm{y}\right|\le \lambda_{r+1}\|\bm{x}\|_{2}\|\bm{y}\|_{2},
\end{equation}
where $\|\cdot\|_2$ denotes the Euclidean 2-norm of $\mathbb{R}^{K}$. By applying it to the practical computation of the mid-range contributions in Eqs.~\eqref{eq::mid_2Dintegral} and \eqref{eq::Mid_contribution}, the relative accuracy requirement is satisfied simply by ensuring that the first neglected singular value $\lambda_{r_{\ell}+1}^{\ell}$ is less than the prescribed tolerance $\epsilon$. Since this reduction is independent of the network during each optimization step, it can be performed as pre-computation for all mid-range Gaussian matrices $\bm{G}_\ell$ with $\ell\in \mathcal{M}_{\text{mid}}$.

With these results, we now have all the necessary components for a complete parameter selection strategy. By combining the various error decay rates--namely, that of the TNN spatial approximation Eq.~\eqref{eq::TNN_approx} and those for the short- and long-range SOG-TNN components in Eqs.~\eqref{eq::error_SOG}, \eqref{short_error_total}, and \eqref{long_error_total}--with the objective of minimizing computational complexity of Eq.~\eqref{eq::total_complexity}, we can numerically determine all system parameters. 
These include the TNN rank $p$; the SOG parameters $b, \sigma$, and $M_1$; 
the range-splitting thresholds $M_{sm}$ and $M_{ml}$; and the Chebyshev expansion order $S$, 
all chosen to satisfy a prescribed accuracy requirement. 
In Section~\ref{sec::num_example}, we will apply these findings to solve the many-body Schr\"{o}dinger equation for several atomic systems. The numerical results will demonstrate that the SOG-TNN framework achieves high-precision solutions for the ground-state energy and wavefunction, and does so with a significant reduction in both computational complexity and memory overhead. This outcome highlights the potential of our method for tackling larger and more complex high-dimensional quantum problems.

\section{Numerical results}\label{sec::num_example}
In this section, we present numerical results for the SOG-TNN applied 
to several atomic systems: 
hybrogen (H), helium (He and $(^{3}\text{S}) \text{He}$), 
lithium (Li), and beryllium (Be).
These results are used to demonstrate the advantages of our new algorithm with respect to accuracy, computational efficiency, and memory overhead. In the calculations, we employ a TNN architecture where each internal feedforward sub-neural network is constructed with two hidden layers. The sine function is used as the activation function $\eth(x)=\sin(x)$ throughout, and the Adam optimizer \cite{kingma2014adam} is applied for the stochastic gradient descent method. All the calculations were performed on a single card with a NVIDIA A800 GPU  (80GB memory, 5120 bit interface, 2039GB/s bandwidth) or a RTX 4090D GPU (24GB memory, 384 bit interface, 1040GB/s bandwidth), both provided by the Academy of Mathematics and Systems Science, Chinese Academy of Sciences.

\subsection{Accuracy results for atom systems}\label{sec :: atom system}
We first validate the reliability and high accuracy of the SOG-TNN algorithm 
by calculating the hydrogen atom, which has a known analytical 
solution \cite{cohen1973quantum}. 
Figure~\ref{fig::H}(a-b) plots the training error for the energy 
and the radial distribution of the wavefunction in an initial 
learning rate $\gamma=10^{-4}$.
With a constant learning rate, there is a clear rebound in the error curve after $1.2\times 10^6$ epochs of training. This nonphysical phenomenon was also encountered in existing machine learning methods for solving the high-dimensional Schr\"{o}dinger equation \cite{202209v2}. To address this issue, we employ an adaptive, piecewise learning rate decay schedule. Whenever a rebound in the error is detected, we restart the training from an early stage with a smaller learning rate to perform a finer optimization. The results using this strategy are shown in Figure~\ref{fig::H}(c-d). Evidently, the errors in both the system energy and the radial distribution function decay more stably and reach a higher level of accuracy over $10^{-7}$ and $10^{-4}$, respectively. These results demonstrate the effectiveness of the SOG-TNN algorithm.

Next, we present numerical tests for the He, Li and Be systems. Based on rough radial distribution of the wavefunction obtained at the end of the first training process with the initial learning rate, we partitioned the one-dimensional domain \(\Omega_0\) into subintervals corresponding to a certain ratio, concentrating more quadrature points near the density peaks. On each subinterval we apply a composite Gauss-Legendre rule with \(N_{\rm panel}\) equally-spaced panels and \(N_{\rm quad}\) nodes per panel. For all applied SOG in this work, $\sigma=1$ is fixed, and the maximum Chebyshev expansion order is also selected to be $S=50$.  Table \ref{table::SOG parameter} lists the SOG parameters at cutoff $r_c=1~a.u.$, including relative error tolerance $\epsilon$, SOG base $b$, critical indices of range-splitting $\left(M_{sm}^{ie},M_{sm}^{ee}\right )$ for $\left( E_{ie}^{\text{short}},E_{ee}^{\text{short}}\right)$, $M_{ml}$ for $E_{ee}^{\text{long}}$, and $M_1$ for right truncation number of the BSA.  The TNN parameters are present in Table \ref{table::TNN parameter}, where the corresponding SOG parameters can be obtained using Table~\ref{table::SOG parameter} for different $r_c$.
\begin{figure}[!ht] 
\centering   
\includegraphics[width=1.0\textwidth]{}
\caption{Computation results for the Hydrogen atom by the SOG-TNN algorithm, comparing performance without and with the piecewise learning rate decay schedule. (a,b) results without the decay schedule, and (c,d) results with the decay schedule. The training curves in (a,c) shows the relative error of the ground state energy $E_0$, and (b,d) shows the $L^{\infty}(\Omega)$ error of radial distribution of the ground state wavefunction $\Psi_0$.}
\label{fig::H}
\end{figure}

\begin{table}[h!]
\centering
\caption{Parameter sets for the SOG approximation with different accuracy at cutoff $r_c=1~a.u.$. 
}
\label{table::SOG parameter}
\begin{tabular}{c|c|c|c|c|c}
\toprule
\ \ \ $\epsilon$\ \ \  &\ \ \  $b$ \ \ \ &\ \ \  $M_{sm}^{ie}$ \ \ \ &\  \ \ $M_{sm}^{ee}$ \ \ \ & \ \ \ $M_{ml}$\ \ \  &\ \ \ $M_1$  \ \ \ \\
\midrule
 \ \ \ $10^{-11}\ \ \ $ & \ \ \  $1.20$  \ \ \ & \ \ \  $-44$  \ \ \ & \ \ \  $-34$  \ \ \ & \ \ \  $-6$  \ \ \ &  \ \ \ $156$ \ \ \  \\
 \ \ \ $10^{-7}\ \ \ $ & \ \ \  $1.30$  \ \ \ & \ \ \  $-25$  \ \ \ & \ \ \  $-18$  \ \ \ & \ \ \  $-5$  \ \ \ &  \ \ \ $80$ \ \ \  \\
 $10^{-6}$ & $1.35$ & $-20$ & $-14$ & $-5$ & $73 $ \\
 $10^{-5}$ & $1.40$ & $-18$ & $-12$ & \ \ \  $-5$  \ \ \ & $60$ \\
\bottomrule
\end{tabular}
\end{table}

\begin{table}[h!]
\centering
\caption{Parameters selected in TNN for different atoms. The symbol '$\backslash$' means no corresponding parameter in the system.}
\label{table::TNN parameter}
\begin{tabular}{c|c|c|c|c|c|c|c|c}
\toprule
System & $\epsilon$ & $r_c~(a.u.)$ & TNN sizes & $( \lambda_{ij}^{\uparrow},\lambda_{ij}^{\downarrow})$ & $K$ & ratio & $N_{\text{panel}}$ & $N_{\text{quad}}$ \\
\midrule
He & $10^{-7}$ &  $10$  & $\left[ 1,50,50,60 \right]$ & $(\backslash\ ,\ \backslash)$ & $340$ & $\left[ 2,1,2\right]$ & $\left[ 10,30,10\right]$ & $\left[ 8,6,8\right]$  \\
Li & $10^{-7}$ & $12$  & $\left[ 1,50,50,60 \right]$ & $(50,\ \backslash)$ & $340$ & $\left[ 2,1,2\right]$ & $\left[ 10,30,10\right]$ & $\left[ 8,6,8\right]$   \\
Be & $10^{-5}$ & $15$  & $\left[ 1,50,50,60 \right]$ & $(50,50)$ & $310$ & $\left[ 1,2,1\right]$ & $\left[ 8,25,8\right]$ & $\left[ 10,6,10\right]$ \\
\bottomrule
\end{tabular}
\end{table}

Consider the helium atom. Since it has two electrons with opposite spins, the spatial part of its wavefunction is symmetric, and thus no antisymmetry constraint is imposed on the SOG-TNN. Figure~\ref{fig::He} shows the relative error training curve of the energy as well as the radial distribution of the wavefunction. The results demonstrate that the SOG-TNN method stably computes a high-precision ground-state energy for the helium atom and accurately captures the radial distribution of its wavefunction. The calculated energy by the SOG-TNN is $E_0=-2.9037241864374510~a.u.$, achieving $10^{-7}$ accuracy in relative error in comparison with the reference solution $E_0^{\text{He}}=-2.90372$
$43770341144~a.u.$ \cite{thakkar1994ground}.

\begin{figure}[!ht] 
    \centering
    \includegraphics[width=1.0\textwidth]{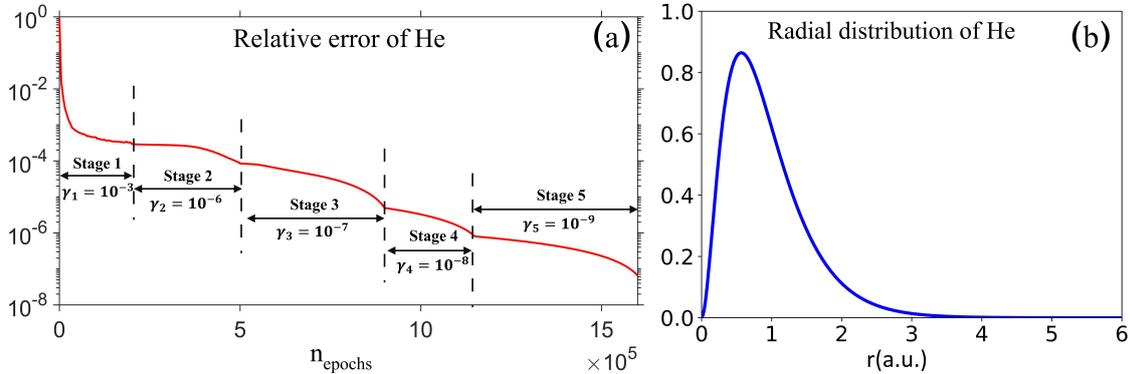}
    \caption{(a) The relative error training curve of the ground state energy and (b) The radial distribution of the ground state wavefunction of the helium atom by the SOG-TNN algorithm. }
    \label{fig::He}
\end{figure}

Lithium and beryllium belong to the second period and present a greater challenge than helium due to their larger number of degrees of freedom and more complex properties. Since they are many-electron systems, Li and Be must satisfy the Pauli exclusion principle. Consequently, the SOG-TNN framework requires the inclusion of an antisymmetry penalty term in the loss function. For these calculations, the penalty factors and corresponding TNN parameters are listed in the second and third rows of Table~\ref{table::TNN parameter}, respectively.
The accuracy targets for the Li and Be atoms are set differently based on the quality of available benchmark data. For Li, a high-precision reference exists, thus we require an error tolerance of~$10^{-7}$. For Be, however, the precision is limited by the reference solution, so our target is relaxed to $10^{-5}$. The corresponding computational results are presented in Figures~\ref{fig::Li} and~\ref{fig::Be}, respectively. The SOG-TNN algorithm stably computes high-precision ground-state energies for both the Li and Be systems. As shown in Figures~\ref{fig::Li}(b) and~\ref{fig::Be}(b), the calculated wavefunction radial distributions exhibit the multi-peak structure of these well-known multi-electron systems. 
The calculated energies are $E_0=-7.4780595868441369~a.u.$ for the Li atom and $E_0=-14.6672437863217056~a.u.$ for the Be atom, both meeting the target 
accuracy with the referenced energy $E_0^{\text{Li}}=-7.478060326(10)~a.u. $ \cite{mckenzie1991variational} and $E_0^{\text{Be}}=-14.66736~a.u.$ \cite{davidson1991ground}, respectively. Furthermore, the error of the antisymmetric term can be formulated by$$\mathcal{E}_{\mathcal{A}}\left[ \Psi \right]=\max_{ij}\left\{\frac{\langle T_{\tau_{\uparrow}(i)\tau_{\uparrow}(j)}\Psi|\Psi\rangle}{\langle\Psi|\Psi\rangle}, \frac{\langle T_{\tau_{\downarrow}(i)\tau_{\downarrow}(j)}\Psi|\Psi\rangle}{\langle\Psi|\Psi\rangle}\right\}+1.$$
Figure~\ref{fig::Ant_check} presents the training curves of the antisymmetry 
penalty in results of Figs.~\ref{fig::Li} and \ref{fig::Be} for Li and Be atoms, 
which illustrates a high-precision convergence level with $n_\mathrm{epochs}$. 
The purple line in Panel (a) displays the evolution of the antisymmetry 
error under another fixed hyperparameter setting of $\lambda_{ij}^{\uparrow}=1$, 
which reveals the accuracy of the antisymmetry can be effectively 
regulated by tuning this hyperparameter.
These results demonstrate that the penalty constraint in SOG-TNN 
effectively prevents convergence to unphysical solutions.
\begin{figure}[!ht] 
\centering
\includegraphics[width=1.0\textwidth]{Li.pdf}
\caption{(a) Relative error training curve of the ground state energy, and (b) radial distribution of the ground state wavefunction of the lithium atom by the SOG-TNN algorithm.}
\label{fig::Li}
\end{figure}

\begin{figure}[!ht] 
\centering   \includegraphics[width=1.0\textwidth]{Be.pdf}
\caption{(a) Relative error training curve of the ground 
state energy, and (b) radial distribution of the ground 
state wavefunction of the beryllium atom by the SOG-TNN algorithm.}\label{fig::Be}
\end{figure}

\begin{figure}[!ht] 
\centering
\includegraphics[width=1.0\textwidth]{inner.pdf}
\caption{Evolution of the antisymmetric penalty error of (a) the lithium system in Figure~\ref{fig::Li}, and (b) the beryllium system in Figure~\ref{fig::Be} by the SOG-TNN algorithm. In the lithium case (a), the penalty parameter $\lambda_{ij}^{\uparrow}$ takes $1$ and $50$, respectively, to provide different strengths of antisymmetry enforcement. }
\label{fig::Ant_check}
\end{figure}

To further demonstrate the accuracy of the SOG-TNN, we test the antisymmetric TNN ansatz on the triplet states of helium $\left( (^{3}\text{S}) \text{He} \right )$, which has two electrons of the same spin, corresponding to the lowest fully antisymmetric states. 
The same TNN parameters as the ground state listed in Table~\ref{table::TNN parameter} are adopted, with the exception of an enlarged cutoff radius $r_c=15$ to cover the $2\text{S}$ orbital, and the results are present in Figure~\ref{fig::He_triplet}.
It is clearly shown that the SOG-TNN method with enforced antisymmetry stably computes a high-precision energy and effectively captures the physical features of the triplet state. The calculated energy is $E_0=-2.1752292661984516~a.u.$, achieving $10^{-7}$ accuracy in relative error, compared to the reference value $- 2.1752293782367913~a.u.$ in Ref. \cite{Nakashima2008solving}.

\begin{figure}[!ht] 
\centering   \includegraphics[width=1.0\textwidth]{He_triplet.pdf}
\caption{(a) Relative error training curve of the triplet state energy, and (b)  radial distribution of the triplet state wavefunction of the helium atom by the SOG-TNN algorithm with enforced antisymmetry. }
\label{fig::He_triplet}
\end{figure}

\subsection{Efficiency of the SOG-TNN method}
We first study the performance advantage offered by the SOG-TNN method on the time and memory cost under the same level of accuracy, by comparing the results with those from the SHE-TNN architecture  \cite{202209v2} based on the spherical harmonic expansion.  All comparison experiments for the He, Li, and Be systems were performed on a single NVIDIA A800 GPU, and a summary of this comparison is presented in Table~\ref{table::Comp}. As pointed out in Section~\ref{sec::SOG-TNN}, the SOG method effectively handles the Coulomb singularity, leading to improved solution accuracy. Moreover, the inherent dimensional separability of Gaussian functions aligns perfectly with the tensor product structure, which facilitates rapid computation and reduces memory requirements. The data in Table~\ref{table::Comp} clearly demonstrate these advantages of the SOG-TNN method, which become more pronounced as system complexity increases. For the most complex system in our experiments, the Be atom, the SHE-TNN method was severely limited by memory constraints; at the limit of a single GPU, it could only achieve an accuracy worse than~$10^{-2}$. In contrast, our proposed method improves this accuracy by more than three orders of magnitude while consuming only one-tenth of the GPU memory. This result indicates that, for high-precision calculations, the SOG-TNN algorithm has the potential on a single GPU to handle systems beyond the size limits of the original approach. Furthermore, it suggests that the introduction of multi-GPU parallelization could enable the accurate computation of even larger systems that are currently challenging.

\begin{table}[htbp]
\centering
\caption{Comparison of the computational accuracy, memory consumption, and time per 
step between the SOG-TNN and the SHE-TNN. The parameters for each test system selected 
by the SOG-TNN architecture are consistent with Tables~\ref{table::SOG parameter} 
and \ref{table::TNN parameter}.}
\label{table::Comp}
\begin{tabular}{c|cc|cc|cc}
\toprule
\multirow{2}{*}{System} & \multicolumn{2}{c|}{Relative error $\mathcal{E}$} & \multicolumn{2}{c|}{Memory (MiB)} & \multicolumn{2}{c}{Time per step ($ms$)} \\
\cmidrule(lr){2-3} \cmidrule(lr){4-5} \cmidrule(lr){6-7} 
                         &  SHE-TNN & SOG-TNN & SHE-TNN & SOG-TNN & SHE-TNN & SOG-TNN \\
\midrule
He        & $8.38\times 10^{-8}$   & $6.56\times 10^{-8}$ & $26338$ & $3790$ & $132$   & $69$    \\
\midrule
Li       & $7.78\times 10^{-7}$   & $9.80\times 10^{-8}$ & $49332$ & $9070$ & $285$   & $142$     \\
\midrule
Be       & $3.98\times 10^{-2}$   & $7.68\times 10^{-6}$ & $75646$ & $8528$ & $476$   & $199$     \\
\bottomrule
\end{tabular}
\end{table}

Finally, as a nonlinear, adaptive sparse grid method, a comparison between the SOG-TNN and traditional sparse grids~\cite{griebel2009tensor} is also meaningful. We consider the ground-state benchmark in Section~\ref{sec :: atom system}.
To ensure a rigorous comparison, we interpret the performance differences through both approximation capacity and constraint satisfaction in Table~\ref{table::Comp SG}. The helium test case serves as a control experiment, as it does not involve antisymmetry constraints. Here, the SOG-TNN achieves remarkable accuracy with far less degrees of freedom, establishing a baseline for its spatial approximation capability. Extending the method to the lithium and beryllium systems, we observe that the SOG-TNN maintains this advantage even under antisymmetry constraints. Although the soft constraint introduces a small additional error, SOG-TNN outperforms the SG method in energy accuracy within the antisymmetry error bound.
Finally, to guarantee a fair comparison with identical constraints, we turn to the helium triplet state where both methods strictly enforce antisymmetry. The SOG-TNN improves the accuracy by near two orders of magnitude over the SG, while reducing the basis size by over two orders of magnitude.
These results indicate that either the soft or the hard constraint achieves high-precision solutions without violating physical constraints.

\begin{table}[htbp]
\centering
\caption{Comparison of results between the SG and the SOG-TNN. The SG results are obtained from \cite{griebel2009tensor}. The symbol '$\backslash$' means no antisymmetry constraints for He atom  and the symbol of $\checkmark$ denotes the strict satisfaction of the antisymmetry. Basis size represents the number of expansion terms in the variational ansatz which corresponds to the rank parameter $p$ within the SOG-TNN framework. Parameters for the SOG-TNN are consistent with Tables~\ref{table::SOG parameter} and \ref{table::TNN parameter}.}
\label{table::Comp SG}
\begin{tabular}{@{}c|ccc|cc|cc@{}}
\toprule
\multirow{2}{*}{system} & \multicolumn{3}{c|}{Ground state energy} & \multicolumn{2}{c|}{Antisymmetric error} & \multicolumn{2}{c}{Basis size}                      \\ \cmidrule(l){2-8} 
                        & SG        & SOG-TNN      & Benchmark      & SG               & SOG-TNN                & SG                           & SOG-TNN            \\ \midrule
He                      & -2.903298 & -2.90372418  & -2.90372438    & $\backslash$     & $\backslash$           & \multicolumn{1}{c|}{32,608} & \multirow{5}{*}{60} \\ \cmidrule(r){1-7}
Li                      & -7.477022 & -7.47805958  & -7.47806032    & $\checkmark$     & $8.68\times 10^{-5}$   & \multicolumn{1}{c|}{22,203} &                      \\ \cmidrule(r){1-7}
Be                      & -14.659783& -14.66724378 & -14.66736      & $\checkmark$     & $3.32\times 10^{-3}$   & \multicolumn{1}{c|}{24,775} &                      \\ \cmidrule(r){1-7}
($^3\text{S}$)He               & -2.175220 & -2.17522927  & -2.17522938    & $\checkmark$     & $\checkmark$           & \multicolumn{1}{c|}{20,386} &                      \\ \bottomrule
\end{tabular}
\end{table}

\section{Conclusion}
\label{sec::conclusion}
In this work, we have proposed a novel SOG-TNN algorithm the direct solution of the high-dimensional Schr\"{o}dinger equation. 
The core contribution of our SOG-TNN approach is the introduction of the SOG decomposition for the Coulomb potential, leading
to a compatible representation with the low-rank tensor product structure of the TNN, and thus a dimension-by-dimension
separation of the ion-electron and electron-electron interactions. 
We further developed a range-splitting scheme based on Gaussian bandwidths, enabling us to treat 
the resultant short-, long-, and mid-range components with specialized, efficient low-rank techniques, 
including asymptotic analysis, Chebyshev expansion, and model reduction. 
The SOG-TNN robustly handles the computational challenges from 
the non-separability and singularity of the Coulomb kernel, significantly reducing the overall computational complexity. Our theoretical 
analysis of the error convergence provides a systematic strategy for tuning all parameters such that a prescribed 
precision is achieved. Numerical results for atomic systems validate 
the high accuracy and efficiency of the SOG-TNN, and demonstrate attractive performance compared to those of existing methods.

The SOG-TNN algorithm is shown to be promising for broader applications. 
Within the Rayleigh principle, it is naturally suited for the computation of excited states. 
The inherent dimensional separability of the method makes it readily applicable to 
anisotropic systems, while model reduction of the Gaussian approximation offers a clear 
path for further optimization. Future work will also include the implementation of a fully optimized SOG-TNN package such that the ground- and excited-state properties of various
complex atomic and molecular structures with high precision and efficiency can be solved.

\section*{Acknowledgement}

The work of Q. Zhou and Z. Xu are supported by National Natural Science Foundation of China (grants No. 12426304, 12325113 and 125B2023). The work of T. Wu, J. Liu and Q. Sun are supported by Zhiyuan Future Scholar Program (grant No. ZIRC2025-01). The work of H. Xie is supported by the Strategic Priority Research Program of the Chinese 
Academy of Sciences (XDB0620203, XDB0640000, XDB0640300), 
 National Key Laboratory of Computational Physics (grant No. 6142A05230501), 
 National Key Research and Development Program of China (2023YFB3309104), 
National Natural Science Foundations of China (grant No. 1233000214), 
Science Challenge Project (TZ2024009), National Center for Mathematics and Interdisciplinary Science, CAS. The work is partially supported by the SJTU Kunpeng \& Ascend Center of Excellence.

\appendix
\section{Derivations of Eqs.~\eqref{eq::Slater_all} and \eqref{eq::Slater_ee}}
\label{app::derivation}
We start from biorthogonalization $\bm{\Xi}_{t_1,t_2}^{\uparrow} = (\bm{\mathcal{M}}_{t_1,t_2}^{\uparrow})^{-1}\bm{\bm{\Phi}}_{t_1}^{\uparrow}$, which provides the following two properties. First, the antisymmetrizations of $\bm{\Xi}_{t_1,t_2}^{\uparrow}$ and $\bm{\bm{\Phi}}_{t_1}^{\uparrow}$ are the same up to a constant $\left|\bm{\mathcal{M}}_{t_1,t_2}^{\uparrow}\right|^{-1}$, that is, 
\begin{equation}
\begin{aligned}
\mathcal{A}\left(\bm{\Xi}_{t_1,t_2}^{\uparrow}\right) &= \frac{1}{N_{\uparrow}!} \det\left[\left((\bm{\mathcal{M}}_{t_1,t_2}^{\uparrow})^{-1}\bm{\bm{\Phi}}_{t_1}^{\uparrow}\right)(\bm{r}_{\tau_{\uparrow}(1)}) , \dots, \left((\bm{\mathcal{M}}_{t_1,t_2}^{\uparrow})^{-1}\bm{\bm{\Phi}}_{t_1}^{\uparrow}\right)(\bm{r}_{\tau_{\uparrow}(N_{\uparrow})}) \right] \\
&= |(\bm{\mathcal{M}}_{t_1,t_2}^{\uparrow})|^{-1} \frac{1}{N_{\uparrow}!} \det \left[\bm{\bm{\Phi}}_{t_1}^{\uparrow}(\bm{r}_{\tau_{\uparrow}(1)}) ,\dots,\bm{\bm{\Phi}}_{t_1}^{\uparrow}(\bm{r}_{\tau_{\uparrow}(N_{\uparrow})}) \right]\\
&= |(\bm{\mathcal{M}}_{t_1,t_2}^{\uparrow})|^{-1}  \mathcal{A}\left(\bm{\bm{\Phi}}_{t_1}^{\uparrow}\right),
\end{aligned}
\end{equation}
and corresponding definitions hold for the spin-down pairs. Here $\det(\bm{v}_1,\cdots,\bm{v}_n)$ denotes the determinant of the $n$-dim colomn vectors $\bm{v}_1,\cdots,\bm{v}_n$.
The second property is about the biorthogonality of $\bm{\Xi}_{t_1,t_2}^{\uparrow}$ with respect to $\bm{\bm{\Phi}}_{t_2}^{\uparrow}$, that is, 
\begin{equation}
\begin{aligned}
\bm{\mathcal{M}}(\bm{\Xi}_{t_1,t_2}^{\uparrow},\bm{\bm{\Phi}}_{t_2}^{\uparrow})&=\begin{bmatrix}
\langle\xi_{t_1,t_2, \tau_{\uparrow}(1)} |{\phi}_{t_2,\tau_{\uparrow}(1)} \rangle  & \dots &\langle\xi_{t_1,t_2, \tau_{\uparrow}(1)} | {\phi}_{t_2,\tau_{\uparrow}(N_{\uparrow})} \rangle \\
\vdots & \ddots  & \vdots\\
\langle\xi_{t_1,t_2, \tau_{\uparrow}(N_{\uparrow})} | {\phi}_{t_2,\tau_{\uparrow}(1)} \rangle  & \dots &\langle\xi_{t_1,t_2, \tau_{\uparrow}(N_{\uparrow})} | {\phi}_{t_2,\tau_{\uparrow}(N_{\uparrow})} \rangle
\end{bmatrix}\\
& = (\bm{\mathcal{M}}_{t_1,t_2}^{\uparrow})^{-1} \begin{bmatrix}
\langle\phi_{t_1, \tau_{\uparrow}(1)} |{\phi}_{t_2,\tau_{\uparrow}(1)} \rangle  & \dots &\langle\phi_{t_1, \tau_{\uparrow}(1)} | {\phi}_{t_2,\tau_{\uparrow}(N_{\uparrow})} \rangle \\
\vdots & \ddots  & \vdots\\
\langle\phi_{t_1, \tau_{\uparrow}(N_{\uparrow})} | {\phi}_{t_2,\tau_{\uparrow}(1)} \rangle  & \dots &\langle\phi_{t_1,\tau_{\uparrow}(N_{\uparrow})} | {\phi}_{t_2,\tau_{\uparrow}(N_{\uparrow})} \rangle
\end{bmatrix}\\
& = (\bm{\mathcal{M}}_{t_1,t_2}^{\uparrow})^{-1}\bm{\mathcal{M}}(\bm{\bm{\Phi}}_{t_1}^{\uparrow},\bm{\bm{\Phi}}_{t_2}^{\uparrow})  \\
&= \bm{I},
\end{aligned}
\end{equation}
and corresponding definitions hold for the spin-down pairs.
Next, combining with L\"owdin rules and biorthogonality, we derive the Eqs.~\eqref{eq::Slater_all} and \eqref{eq::Slater_ee} element by element.

The kinetic part $\left(E_{\text{K}}\right)_{t_1,t_2}$ can be calculated by
\begin{equation}
\begin{aligned}
\frac{1}{2}\sum_{i}^N \left\langle \nabla_{i}\Psi | \nabla_{i} \Psi \right \rangle_{({t_1,t_2})} &= \frac{1}{2N_{\downarrow}!}\left(\sum_{{i=1}}^{N_{\uparrow}} \left \langle \nabla_{\tau_{\uparrow}(i)}\mathcal{A}(\bm{\bm{\Phi}}_{t_1}^{\uparrow})|\nabla_{\tau_{\uparrow}(i)} \mathcal{A}(\bm{\bm{\Phi}}_{t_2}^{\uparrow}) \right\rangle\right) \cdot |\bm{\mathcal{M}}_{{t_1,t_2}}^{\downarrow}| \\
&+ \frac{1}{2N_{\uparrow}!}|\bm{\mathcal{M}}_{{t_1,t_2}}^{\uparrow}| \cdot \left(\sum_{i=1}^{N_{\downarrow}}\left\langle \nabla_{\tau_{\downarrow}(i)}\mathcal{A}(\bm{\Phi}_{t_1}^{\downarrow})|\nabla_{\tau_{\downarrow}(i)} \mathcal{A}(\bm{\Phi}_{t_2}^{\downarrow})\right\rangle\right),
\end{aligned}
\end{equation}
where 
\begin{equation*}
\begin{aligned}
&\sum_{i=1}^{N_{\uparrow}} \langle \nabla_{\tau_{\uparrow}(i)}\mathcal{A}(\bm{\Phi}_{t_1}^{\uparrow})|\nabla_{\tau_{\uparrow}(i)} \mathcal{A}(\bm{\Phi}_{t_2}^{\uparrow}) \rangle \\
= &\sum_{i=1}^{N_{\uparrow}} \frac{1}{N_\uparrow !} \underbrace{\begin{vmatrix}\langle\phi_{{t_1},\tau_{\uparrow}(1)} | \phi_{{t_2},\tau_{\uparrow}(1)} \rangle  & \dots & \langle\nabla\phi_{{t_1},\tau_{\uparrow}(1)} | \nabla\phi_{{t_2},\tau_{\uparrow}(i)} \rangle&\dots &\langle\phi_{{t_1},\tau_{\uparrow}(1)} | \phi_{{t_2},\tau_{\uparrow}(N_{\uparrow})} \rangle \\\vdots&  & \vdots& & \vdots\\\langle\phi_{{t_1},\tau_{\uparrow}(N_{\uparrow})} | \phi_{{t_2},\tau_{\uparrow}(1)} \rangle  & \dots &\langle\nabla\phi_{{t_1},\tau_{\uparrow}(N_{\uparrow})} |\nabla \phi_{{t_2},\tau_{\uparrow}(i)} \rangle &\dots &\langle\phi_{{t_1},\tau_{\uparrow}(N_{\uparrow})} | \phi_{{t_2},\tau_{\uparrow}(N_{\uparrow})} \rangle\end{vmatrix}}_{\bm{|\mathcal{K}}_{{t_1},{t_2},\tau_{\uparrow}(i)}^{\uparrow}|}.
\end{aligned}
\end{equation*}
Then we can use low rank update technique to compute $|\mathcal{K}_{{t_1},{t_2},\tau_{\uparrow}(i)}^{\uparrow}|$ efficiently,  
\begin{equation}
\label{eq::K_up}
\begin{aligned}
&\left|\bm{\mathcal{K}}_{t_1,t_2,\tau_{\uparrow}(i)}^{\uparrow}\right|\\
= &\int{
\begin{vmatrix}
\phi_{t_1,\tau_{\uparrow}(1)}  \phi_{t_2,\tau_{\uparrow}(1)}   & \dots & \nabla \phi_{t_1,\tau_{\uparrow}(1)} \nabla \phi_{t_2,\tau_{\uparrow}(i)} &\dots &\phi_{t_1,\tau_{\uparrow}(1)}  \phi_{t_2,\tau_{\uparrow}(N_{\uparrow})} \\
\vdots&  & \vdots& & \vdots\\
\phi_{t_1,\tau_{\uparrow}(N_{\uparrow})}  \phi_{t_2,\tau_{\uparrow}(1)}   & \dots &\nabla \phi_{t_1,\tau_{\uparrow}(N_{\uparrow})}\nabla \phi_{t_2,\tau_{\uparrow}(i)}  &\dots &\phi_{t_1,\tau_{\uparrow}(N_{\uparrow})}  \phi_{t_2,\tau_{\uparrow}(N_{\uparrow})} 
\end{vmatrix}
}\mathrm{d}\Omega^{\uparrow} \\
= &|\bm{\mathcal{M}}_{t_1,t_2}^{\uparrow}| \int{
\begin{vmatrix}
\xi_{t_1,t_2,\tau_{\uparrow}(1)}  \phi_{t_2,\tau_{\uparrow}(1)}   & \dots & \nabla \xi_{t_1,t_2,\tau_{\uparrow}(1)}\nabla \phi_{t_2,\tau_{\uparrow}(i)} &\dots &\xi_{t_1,t_2,\tau_{\uparrow}(1)}  \phi_{t_2,\tau_{\uparrow}(N_{\uparrow})} \\
\vdots&  & \vdots& & \vdots\\
\xi_{t_1,t_2,\tau_{\uparrow}(N_{\uparrow})}  \phi_{t_2,\tau_{\uparrow}(1)}   & \dots &\nabla \xi_{t_1,t_2,\tau_{\uparrow}(N_{\uparrow})}\nabla \phi_{t_2,\tau_{\uparrow}(i)}  &\dots &\xi_{t_1,t_2,\tau_{\uparrow}(N_{\uparrow})}  \phi_{t_2,\tau_{\uparrow}(N_{\uparrow})} 
\end{vmatrix}
}\mathrm{d}\Omega^{\uparrow} \\
=& |\bm{\mathcal{M}}_{t_1,t_2}^{\uparrow}| \int{
\begin{vmatrix}
1   & \dots & \nabla \xi_{t_1,t_2,\tau_{\uparrow}(i)}\nabla \phi_{t_2,\tau_{\uparrow}(i)}) &\dots &0 \\
\vdots&  & \vdots& & \vdots\\
0    & \dots &\nabla \xi_{t_1,t_2,\tau_{\uparrow}(N_{\uparrow})}\nabla \phi_{t_2,\tau_{\uparrow}(i)}  &\dots &1 
\end{vmatrix}
}\mathrm{d}\bm{r}_{\tau_{\uparrow}(i)} \\
=& |\bm{\mathcal{M}}_{t_1,t_2}^{\uparrow}| \langle  \nabla \xi_{t_1,t_2,\tau_{\uparrow}(i)} \mid  \nabla \phi_{t_2,\tau_{\uparrow}(i)} \rangle,
\end{aligned}
\end{equation}
where $\mathrm{d}\Omega^{\uparrow}$ denotes $\mathrm{d}\bm{r}_{\tau_{\uparrow}(1)}\cdots \mathrm{d}\bm{r}_{\tau_{\uparrow}(N_{\uparrow})}$. The computation of the spin-down part is analogous to Eq.~\eqref{eq::K_up}, which provides that
\begin{equation}
\label{eq::S_K}
\frac{1}{2}\sum_i^N \langle \nabla_i\Psi | \nabla_i \Psi \rangle_{(t_1,t_2)} = \frac{1}{2}\left\langle\Psi|\Psi\right\rangle_{t_1,t_2} \sum_{i}^N  \langle \nabla_i \xi_{t_1,t_2,i}|\nabla_i \phi_{t_2,i}\rangle.
\end{equation}
Following the similar procedure of $\left(E_{\text{K}}\right)_{t_1,t_2}$, the ion-electron interaction $\left(E_{ie}\right)_{t_1,t_2}$ can also be regarded as an one-body operator  with
\begin{equation} 
\begin{aligned}
&- \sum_k^M\sum_i^N \left \langle \Psi\left|\frac{Q_k}{|\bm{r}_i-\bm{R}_k|}\right|\Psi \right\rangle_{t_1,t_2} \\= & -\sum_{k}^M \frac{1}{N_{\downarrow}!}\left(\sum_{i=1}^{N_{\uparrow}} \langle \mathcal{A}(\bm{\Phi}_{t_1}^{\uparrow})\left|\frac{Q_k}{|\bm{r}_{\tau_{\uparrow}(i)}-\bm{R}_k|}\right|\mathcal{A}(\bm{\Phi}_{t_2}^{\uparrow}) \rangle\right) \cdot |\bm{\mathcal{M}}_{{t_1},{t_2}}^{\downarrow}|  \\&- \sum_{k}^M \frac{1}{N_{\uparrow}!}|\bm{\mathcal{M}}_{{t_1},{t_2}}^{\uparrow}| \cdot \left(\sum_{i=1}^{N_{\downarrow}}\langle \mathcal{A}(\bm{\Phi}_{t_1}^{\downarrow})\left|\frac{Q_k}{|\bm{r}_{\tau_{\downarrow}(i)}-\bm{R}_k|}\right| \mathcal{A}(\bm{\Phi}_{t_2}^{\downarrow})\rangle\right),
\end{aligned}
\end{equation}
where 
\begin{equation}
\label{eq::A_ie}
\begin{aligned}
& \sum_{i=1}^{N_{\uparrow}} \left \langle \mathcal{A}(\bm{\Phi}_s^{\uparrow})\left|\frac{Q_k}{|\bm{r}_{\tau_{\uparrow}(i)}-\bm{R}_k|}\right|\mathcal{A}(\bm{\Phi}_t^{\uparrow}) \right \rangle\\ =& \sum_{i=1}^{N_{\uparrow}} \left \langle \mathcal{A}(\bm{\Phi}_s^{\uparrow})\left|\frac{Q_k}{|\bm{r}_{\tau_{\uparrow}(i)}-\bm{R}_k|}\right|\prod_i^{N_\uparrow} \phi_{{t_2},\tau_{\uparrow}(i)} \right \rangle \\
=& |\bm{\mathcal{M}}_{s,t}^{\uparrow}| \sum_{i=1}^{N_{\uparrow}} \left \langle \mathcal{A}(\bm{\Xi}_{s,t}^{\uparrow})\left|\frac{Q_k}{|\bm{r}_{\tau_{\uparrow}(i)}-\bm{R}_k|}\right|\prod_i^{N_\uparrow} \phi_{{t_2},\tau_{\uparrow}(i)} \right \rangle \\
=& \frac{|\bm{\mathcal{M}}_{s,t}^{\uparrow}|}{N_\uparrow !}\sum_{i=1}^{N_{\uparrow}}\int{\begin{vmatrix}
\xi_{s,t,{\tau_{\uparrow}(1)}}    & \dots & \xi_{s,t,{\tau_{\uparrow}(1)}}\left|\frac{Q_k}{|\bm{r}_{\tau_{\uparrow}(i)}-\bm{R}_k|}\right|  &\dots &\xi_{s,t,{\tau_{\uparrow}(1)}}  \\
\vdots&  & \vdots& & \vdots\\
\xi_{s,t,{\tau_{\uparrow}(N_{\uparrow})}}   & \dots &\xi_{s,t,{\tau_{\uparrow}(N_{\uparrow})}}\left|\frac{Q_k}{|\bm{r}_{\tau_{\uparrow}(i)}-\bm{R}_k|}\right|    &\dots &\xi_{s,t,{\tau_{\uparrow}(N_{\uparrow})}}  
\end{vmatrix}} \prod_{i=1}^{N_{\uparrow}} \phi_{t,{\tau_{\uparrow}(i)}} \mathrm{d}\Omega^{\uparrow} \\
=&\frac{|\bm{\mathcal{M}}_{s,t}^{\uparrow}|}{N_\uparrow !} \sum_{i=1}^{N_{\uparrow}}\int{\begin{vmatrix}
1   & \dots & \xi_{s,t,{\tau_{\uparrow}(1)}}\left|\frac{Q_k}{|\bm{r}_{\tau_{\uparrow}(i)}-\bm{R}_k|}\right|  \phi_{t,{\tau_{\uparrow}(i)}} &\dots &0 \\
\vdots&  & \vdots& & \vdots\\
0    & \dots &\xi_{s,t,{\tau_{\uparrow}(N_{\uparrow})}}\left|\frac{Q_k}{|\bm{r}_{\tau_{\uparrow}(i)}-\bm{R}_k|}\right| \phi_{t,{\tau_{\uparrow}(i)}}  &\dots &1 
\end{vmatrix}}\mathrm{d}\bm{r}_{\tau_{\uparrow}(i)} \\
=& \frac{|\bm{\mathcal{M}}_{s,t}^{\uparrow}|}{N_\uparrow !} \sum_{i=1}^{N_{\uparrow}}\left \langle  \xi_{s,t,{\tau_{\uparrow}(i)}}\left|\frac{Q_k}{|\bm{r}_{\tau_{\uparrow}(i)}-\bm{R}_k|}\right| \phi_{t,{\tau_{\uparrow}(i)}}\right \rangle.
\end{aligned}
\end{equation}
The first equality in Eq.~\eqref{eq::A_ie} holds because the antisymmetrization operator $\mathcal{A}$ is a linear, orthogonal, and idempotent projection (see \cite{beylkin2008approximating}). Indeed, the spin-down part can be simplified in a similar manner, which provides
\begin{equation}
\label{eq::S_ie}
- \sum_k^M\sum_i^N \left \langle \Psi\left|\frac{Q_k}{|\bm{r}_i-\bm{R}_k|}\right|\Psi \right\rangle_{t_1,t_2} =- \left\langle\Psi|\Psi\right\rangle_{t_1,t_2} \sum_k^M\sum_{i}^N  \left\langle \xi_{t_1,t_2,i}\left|\frac{Q_k}{|\bm{r}_i-\bm{R}_k|}\right| \phi_{t_2,i} \right\rangle.
\end{equation}
Finally, we cosider the element-wise computation of electron-electron interaction $\left(E_{ee}\right)_{t_1,t_2}$,
\begin{equation}
\begin{aligned}
\sum_{i<j}\left\langle\Psi\left|\frac{1}{|\bm{r}_{i}-\bm{r}_j|}\right|\Psi\right\rangle_{t_1,t_2} &= \sum_{1\le i<j \le N_{\uparrow}}\left\langle\Psi\left|\frac{1}{|\bm{r}_{\tau_{\uparrow}(i)}-\bm{r}_{\tau_{\uparrow}(j)}|}\right|\Psi\right\rangle_{t_1,t_2} 
\\ &+ \sum_{1\le i \le N_{\uparrow}, 1 \le j \le N_{\downarrow}}\left\langle\Psi\left|\frac{1}{|\bm{r}_{\tau_{\uparrow}(i)}-\bm{r}_{\tau_{\downarrow}(j)}|}\right|\Psi\right\rangle_{t_1,t_2} 
\\ &+ \sum_{1 \le i<j \le N_{\downarrow}}\left\langle\Psi\left|\frac{1}{|\bm{r}_{\tau_{\downarrow}(i)}-\bm{r}_{\tau_{\downarrow}(j)}|}\right|\Psi\right\rangle_{t_1,t_2}.
\end{aligned}
\end{equation}
For the same-spin case, we can still exploit the low-rank update of determinant to fast evaluation,
\begin{equation*}
\begin{aligned}
&\left \langle \Psi\left|\frac{1}{|\bm{r}_{\tau_{\uparrow}(i)}-\bm{r}_{\tau_{\uparrow}(j)}|}\right| \Psi \right \rangle_{({t_1},{t_2})}\\ =& \frac{|\bm{\mathcal{M}}_{{t_1},{t_2}}^{\downarrow}|}{N_{\downarrow}!} \left \langle \mathcal{A}(\bm{\Phi}_{t_1}^{\uparrow})\left|\frac{1}{|\bm{r}_{\tau_{\uparrow}(i)}-\bm{r}_{\tau_{\uparrow}(j)}|}\right|\mathcal{A}(\bm{\Phi}_{t_2}^{\uparrow}) \right\rangle \\
=& \frac{|\bm{\mathcal{M}}_{{t_1},{t_2}}^{\downarrow}|}{N_{\downarrow}!} \left \langle \mathcal{A}(\bm{\Phi}_{t_1}^{\uparrow})\left|\frac{1}{|\bm{r}_{\tau_{\uparrow}(i)}-\bm{r}_{\tau_{\uparrow}(j)}|}\right|\prod_i^{N_\uparrow} \phi_{{t_2},\tau_{\uparrow}(i)}\right\rangle\\
=& \frac{|\bm{\mathcal{M}}_{{t_1},{t_2}}^{\uparrow}| \cdot |\bm{\mathcal{M}}_{{t_1},{t_2}}^{\downarrow}|}{ N_{\downarrow}!} \left \langle \mathcal{A}(\bm{\Xi}_{{t_1},{t_2}}^{\uparrow})\left|\frac{1}{|\bm{r}_{\tau_{\uparrow}(i)}-\bm{r}_{\tau_{\uparrow}(j)}|}\right|\prod_i^{N_\uparrow} \phi_{{t_2},\tau_{\uparrow}(i)}\right\rangle\\
=&  \int{\frac{\left\langle\Psi|\Psi\right\rangle_{t_1,t_2}}{|\bm{r}_{\tau_{\uparrow}(i)}-\bm{r}_{\tau_{\uparrow}(j)}|}\begin{vmatrix}
\xi_{{t_1},{t_2},\tau_{\uparrow}(1)}    & \dots & \xi_{{t_1},{t_2},\tau_{\uparrow}(1)}  &\dots &\xi_{{t_1},{t_2},\tau_{\uparrow}(1)}  \\
\vdots&  & \vdots& & \vdots\\
\xi_{{t_1},{t_2},\tau_{\uparrow}(N_{\uparrow})}   & \dots &\xi_{{t_1},{t_2},\tau_{\uparrow}(N_{\uparrow})}    &\dots &\xi_{{t_1},{t_2},\tau_{\uparrow}(N_{\uparrow})}  
\end{vmatrix}} \prod_i^{N_\uparrow} \phi_{{t_2},\tau_{\uparrow}(i)} \mathrm{d}\Omega^{\uparrow} \\
=& \int{\frac{\left\langle\Psi|\Psi\right\rangle_{t_1,t_2} }{|\bm{r}_{\tau_{\uparrow}(i)}-\bm{r}_{\tau_{\uparrow}(j)}|}\begin{vmatrix}
1  & \dots & \xi_{{t_1},{t_2},\tau_{\uparrow}(1)}\phi_{{t_2},\tau_{\uparrow}(i)} & \dots & \xi_{{t_1},{t_2},\tau_{\uparrow}(1)}\phi_{{t_2},\tau_{\uparrow}(j)} &\dots &0 \\
\vdots  &  & \vdots &  & \vdots & &\vdots\\
0  & \dots & \xi_{{t_1},{t_2},{\tau_{\uparrow}(N_{\uparrow})}}\phi_{{t_2},\tau_{\uparrow}(i)} & \dots & \xi_{{t_1},{t_2},{\tau_{\uparrow}(N_{\uparrow})}}\phi_{{t_2},\tau_{\uparrow}(j)}&\dots & 1
\end{vmatrix}}\mathrm{d}\bm{r}_{\tau_{\uparrow}(i)}\mathrm{d}\bm{r}_{\tau_{\uparrow}(j)} \\
=&  \left\langle\Psi|\Psi\right\rangle_{t_1,t_2}\left \langle 
\begin{vmatrix}
\xi_{{t_1},{t_2},\tau_{\uparrow}(i)}(r_i)  & \xi_{{t_1},{t_2},\tau_{\uparrow}(i)}(r_j)\\
\xi_{{t_1},{t_2},\tau_{\uparrow}(j)}(r_i)  & \xi_{{t_1},{t_2},\tau_{\uparrow}(j)}(r_j)
\end{vmatrix}\left|\frac{1}{|\bm{r}_{\tau_{\uparrow}(i)}-\bm{r}_{\tau_{\uparrow}(j)}|}\right|\phi_{{t_2},\tau_{\uparrow}(i)}\phi_{{t_2},\tau_{\uparrow}(j)}
\right\rangle.
\end{aligned} 
\end{equation*}
The spin-down part of the same-spin case is analogous. For the opposite-spin case, it admits a simpler form that
\begin{equation*}
\begin{aligned}
&\left \langle \Psi \left|\frac{1}{|\bm{r}_{\tau_{\uparrow}(i)}-\bm{r}_{\tau_{\downarrow}(j)}|}\right| \Psi \right \rangle_{(t_1,{t_2})}\\ =& \left \langle \mathcal{A}(\bm{\Phi}_{t_1}^{\uparrow}) \cdot \mathcal{A}(\bm{\Phi}_{t_1}^{\downarrow}) \left|\frac{1}{|\bm{r}_{\tau_{\uparrow}(i)}-\bm{r}_{\tau_{\downarrow}(j)}|}\right| \mathcal{A}(\bm{\bm{\Phi}}_{t_2}^{\uparrow}) \cdot \mathcal{A}(\bm{\Phi}_{t_2}^{\downarrow}) \right \rangle \\
=& \left \langle \mathcal{A}(\bm{\Phi}_{t_1}^{\uparrow}) \cdot \mathcal{A}(\bm{\Phi}_{t_1}^{\downarrow}) \left|\frac{1}{|\bm{r}_{\tau_{\uparrow}(i)}-\bm{r}_{\tau_{\downarrow}(j)}|}\right| \prod_i^{N_\uparrow} \phi_{{t_2},\tau_{\uparrow}(i)} \cdot \prod_j^{N_\downarrow} \phi_{{t_2},\tau_{\downarrow}(j)} \right \rangle \\
=& |\bm{\mathcal{M}}_{{t_1},{t_2}}^{\uparrow}| |\bm{\mathcal{M}}_{{t_1},{t_2}}^{\downarrow}|\left \langle \mathcal{A}(\bm{\Xi}_{{t_1},{t_2}}^{\uparrow}) \cdot \mathcal{A}(\bm{\Xi}_{{t_1},{t_2}}^{\downarrow}) \left|\frac{1}{|\bm{r}_{\tau_{\uparrow}(i)}-\bm{r}_{\tau_{\downarrow}(j)}|}\right|\prod_i^{N_\uparrow} \phi_{{t_2},\tau_{\uparrow}(i)} \cdot \prod_j^{N_\downarrow} \phi_{{t_2},\tau_{\downarrow}(j)} \right \rangle \\
=& \int{\frac{\left\langle\Psi|\Psi\right\rangle_{t_1,t_2} }{|\bm{r}_{\tau_{\uparrow}(i)}-\bm{r}_{\tau_{\downarrow}(j)}|} }\begin{vmatrix}
1   & \dots & \xi_{{t_1},{t_2},{\tau_{\uparrow}(1)}}  \phi_{{t_2},{\tau_{\uparrow}(i)}} &\dots &0 \\
\vdots&  & \vdots& & \vdots\\
0    & \dots &\xi_{{t_1},{t_2},{\tau_{\uparrow}(N_{\uparrow})}} \phi_{{t_2},{\tau_{\uparrow}(i)}}  &\dots &1 
\end{vmatrix}\begin{vmatrix}
1   & \dots & \xi_{{t_1},{t_2},{\tau_{\downarrow}(1)}}  \phi_{{t_2},{\tau_{\downarrow}(j)}} &\dots &0 \\
\vdots&  & \vdots& & \vdots\\
0    & \dots &\xi_{{t_1},{t_2},{\tau_{\downarrow}(N_{\downarrow})}} \phi_{{t_2},{\tau_{\downarrow}(j)}}  &\dots &1 
\end{vmatrix} \mathrm{d}\bm{r}_{\tau_{\uparrow}(i)} \mathrm{d}\bm{r}_{\tau_{\downarrow}(j)} \\
=& \left\langle\Psi|\Psi\right\rangle_{t_1,t_2}\left \langle \xi_{{t_1},{t_2},\tau_{\uparrow}(i)}\xi_{{t_1},{t_2},\tau_{\downarrow}(j)}\left|\frac{1}{|\bm{r}_{\tau_{\uparrow}(i)}-\bm{r}_{\tau_{\downarrow}(j)}|}\right|\phi_{{t_2},\tau_{\uparrow}(i)}\phi_{{t_2},\tau_{\downarrow}(j)} \right \rangle 
\end{aligned}
\end{equation*}
Combining the above cases yields
\begin{equation}
\label{eq::S_ee}
\begin{aligned}
&\left\langle\Psi\left|\frac{1}{|\bm{r}_i-\bm{r}_j|}\right|\Psi\right\rangle_{t_1,t_2}\\
=&\left\{\begin{aligned}
&\left\langle\Psi|\Psi\right\rangle_{t_1,t_2}\left\langle\begin{vmatrix}
\xi_{t_1,t_2,i}(\bm{r}_i)  & \xi_{t_1,t_2,i}(\bm{r}_j)\\
\xi_{t_1,t_2,j}(\bm{r}_i)  & \xi_{t_1,t_2,j}(\bm{r}_j)
\end{vmatrix}\left|\frac{1}{|\bm{r}_i-\bm{r}_j|}\right|\phi_{t_2,i}\phi_{t_2,j}\right\rangle,\quad \text{$i,j$ same spin}\\
&\left\langle\Psi|\Psi\right\rangle_{t_1,t_2}\left\langle\xi_{t_1,t_2,i}\xi_{t_1,t_2,j}\left|\frac{1}{|\bm{r}_i-\bm{r}_j|}\right|\phi_{t_2,i}\phi_{t_2,j}\right\rangle,\quad \text{otherwise}.\\
\end{aligned}
\right.
\end{aligned}
\end{equation}
Therefore, with Eq.~\eqref{eq::S_K}, Eq.~\eqref{eq::S_ie} and Eq.~\eqref{eq::S_ee}, we finish the derivation of Eqs.~\eqref{eq::Slater_all} and \eqref{eq::Slater_ee} in the main manuscript.


\end{document}